\documentclass[a4paper,11pt,onecolumn,tightenlines,superscriptaddress]{article}
\pdfoutput=1
\usepackage{preamble}
\usepackage[a4paper,left=3cm,right=3cm]{geometry}

\begin{document}
\title{Lattice moments and quantum designs}
\author{Jonathan Conrad\textsuperscript{$*$}, Kristian Holm\textsuperscript{$\dagger$}}
\date{\today}

\maketitle
{\renewcommand{\thefootnote}{}\footnotetext{%
  \hspace{-1.8em}\textsuperscript{$*$}{\'E}cole Polytechnique F{\'e}d{\'e}rale de Lausanne, email: \href{jonathan.conrad@epfl.ch}{jonathan.conrad@epfl.ch}
  \\
  \textsuperscript{$\dagger$}Christian-Albrechts-Universit{\"a}t zu Kiel, email: \href{holm@math.uni-kiel.de}{holm@math.uni-kiel.de}}}
\begin{abstract}
    We show that the ensemble of unimodular Gottesman--Kitaev--Preskill distributions fails to form a rigged quantum $3$-design and quantify the replica defect that determines this failure. This is established by rederiving a second moment formula for symplectic lattices in a self-contained manner.
\end{abstract}

\section*{Introduction}
Gottesman--Kitaev--Preskill (GKP) codes fix tempered distributions associated to infinite-dimensional Hilbert spaces $\CH=L^2(\R^n)$ with a symplectic lattice symmetry by $\Lambda\subseteq\Lambda^{\perp}\subset \R^{2n}$. In~\cite{Conrad_2025} it was shown that uniform ensembles of such distributions over the space of symplectic lattices $Y_n$ and their fundamental translations $z\in\CP(\Lambda)$ yield infinite-dimensional quantum $2$-designs. Here we show how they fail to also yield $3$-designs by proving the following.

\begin{theorem}\label{thm:main}
For $n\geq2$, the ensemble of unimodular GKP states, chosen uniformly over $\Lambda\in Y_n$ and $z\in \CP(\Lambda)$, fails to form a rigged quantum $3$-design relative to Schwartz operators. Its third moment, relative to Schwartz operators, is given by
\begin{equation}
    M_3=6\Pi_3+\CD_3,\quad 
    \CD_3=\frac{1}{2\zeta(2n)}
    \sum_{\substack{a, b\in \Z^{\times}\\ a+b\neq0}}Q_{ab}^{-n}U_{S_{ab}}(I+P_{(12)}).
\end{equation}
 $U_{S_{ab}}$ is a unitary representation for matrices $S_{ab}\in \Sp_6(\R)\cap \SO_6(\R)$, $P_{(12)}$ the permutation operator of tensor factors $(1, 2)$, and $Q_{ab}=a^2+b^2+ab$.
For the $n$-mode vacuum state $\ket{0}$, it holds that
\begin{equation}
     \|\CD_3\|=   \langle0|^{\otimes3}\CD_3|0\rangle^{\otimes3}
    =\frac{\zeta_{{\sf A}_2}(n)}{\zeta(2n)}-6= 6\cdot  3^{-n}+O(7^{-n}),
\end{equation}
where $\zeta_{{\sf A}_2}(n)=\sum_{(a,b)\neq(0,0)}(a^2+b^2+ab)^{-n}$ is Epstein's $\zeta$-function for ${\sf A}_2$.
\end{theorem}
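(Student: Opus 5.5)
The plan is to compute the third moment directly by averaging the GKP state over $z$ and then over $\Lambda$ with a Siegel–Weil-type second moment formula. Write the unimodular GKP distribution as the lattice sum of displacements
\[
\Pi_{\Lambda,z}=\sum_{\lambda\in\Lambda}e^{i\omega(z,\lambda)}D(\lambda),
\]
with appropriate phases. The tensor cube is then
\[
\Pi_{\Lambda,z}^{\otimes3}=\sum_{\lambda_1,\lambda_2,\lambda_3}e^{i\omega(z,\lambda_1+\lambda_2+\lambda_3)}D(\lambda_1)\otimes D(\lambda_2)\otimes D(\lambda_3).
\]
Averaging $z$ over $\CP(\Lambda)$ imposes $\lambda_1+\lambda_2+\lambda_3=0$, since $\Lambda$ is unimodular and $\Lambda^\perp=\Lambda$. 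We are left with
\[
\sum_{\lambda_1,\lambda_2\in\Lambda}D(\lambda_1)\otimes D(\lambda_2)\otimes D(-\lambda_1-\lambda_2),
\]
which is a sum over pairs in $\Lambda^2$ of a function $F(\lambda_1,\lambda_2)$. Paired against Schwartz operators, this function is Schwartz on $\R^{4n}$.

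Next we average over $Y_n$ using the second moment formula for symplectic lattices rederived earlier in the paper. That formula decomposes the pair sum according to the rank of $\{\lambda_1,\lambda_2\}$:
\begin{itemize}
\item Rank $0$ gives $F(0,0)$.
\item Rank $2$, meaning the pair is linearly independent, gives $\int_{\R^{4n}}F$.
\item Rank $1$ gives primitive-vector contributions. Here $(\lambda_1,\lambda_2)=(av,bv)$ with $v$ primitive, weighted by $\frac{1}{\zeta(2n)}\sum_{(a,b)}\int_{\R^{2n}}F(av,bv)\,dv$, in the form the lemma provides.
\end{itemize}
The rank-$0$ and rank-$2$ terms, together with the rank-$1$ terms where one of $a$, $b$, $a+b$ vanishes, should assemble into $6\Pi_3$. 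I would recover this by matching against the known $2$-design computation: the degenerate rank-$1$ terms produce the $\mathrm{SWAP}$-type operators $P_\sigma$, and the full integral produces the twisted-parity terms. Summing over $\sigma\in S_3$ gives $6\Pi_3$ after the normalization $\Pi_3=\frac16\sum_\sigma P_\sigma$ in the rigged sense.

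The remaining terms, with $a,b\in\Z^\times$ and $a+b\neq0$, give $\CD_3$. The key computation is
\[
\int_{\R^{2n}}D(av)\otimes D(bv)\otimes D(-(a+b)v)\,dv .
\]
This is a linear Gaussian-free integral of displacements along the direction $(a,b,-(a+b))\otimes v$. Normalizing that direction vector (its squared norm is $2Q_{ab}$) and extending it to an orthonormal symplectic frame gives $S_{ab}\in\Sp_6\cap\SO_6$, acting as $\mathrm{O}_3\otimes I_{2n}$. The integral then becomes $Q_{ab}^{-n}$, up to a $2^{-n}$ absorbed by the identity $\int D(v)\,dv\propto$ parity, times the passive unitary $U_{S_{ab}}$ conjugating one-mode parity onto $I\otimes I\otimes P$. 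Symmetrizing under $(a,b)\leftrightarrow(b,a)$ yields the factor $(I+P_{(12)})$, with the $\frac12$ accounting for double counting.

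For the vacuum value, passive Gaussian unitaries fix $\ket{0}^{\otimes3}$, so each term contributes $Q_{ab}^{-n}$ times a constant. Summing over all $(a,b)\neq(0,0)$ gives $\zeta_{{\sf A}_2}(n)$. The six excluded lines, where $a=0$, $b=0$ or $a+b=0$ (each $\pm$, with $Q=m^2$), contribute $6\zeta(2n)$, which yields $\zeta_{{\sf A}_2}(n)/\zeta(2n)-6$. The smallest surviving norm is $Q=3$, attained at six points: $(1,1)$, $(1,-2)$, $(-2,1)$ and their negatives. The next values $Q=4$ are excluded lines, and the next surviving value is $Q=7$; the factor $1/\zeta(2n)=1+O(4^{-n})$ only perturbs at order $12^{-n}$. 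This gives $6\cdot3^{-n}+O(7^{-n})$.

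Positivity of $\CD_3$ and the identity $\|\CD_3\|=\langle 0|^{\otimes 3}\CD_3|0\rangle^{\otimes 3}$ should follow because each $U_{S_{ab}}(I+P_{(12)})$ is a positive combination of operators maximized on the vacuum. The nonvanishing of this value shows that the ensemble fails to be a $3$-design.

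I expect the main obstacle to be the precise bookkeeping in the rank-$1$ sector: correctly identifying which $(a,b)$ produce permutation operators inside $6\Pi_3$, and pinning down normalizations of $U_{S_{ab}}$ and of the factor $\tfrac12$. I would fix these by checking the vacuum expectation against the explicit lattice sum.
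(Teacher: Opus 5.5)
There is a genuine gap in the second-moment formula you invoke. You describe a rank-zero term, $\int F$ from the $\R$-independent pairs, and a primitive-vector rank-one term. That is Rogers' formula for unimodular lattices (the $\mathrm{SL}_{2n}$ case), and it is false for symplectic lattices. Independent pairs carry the extra invariant $\omega(\lambda_1,\lambda_2)\in\Z$, and Theorem~\ref{thm:2nd_moment} gives $\int H+\CL(H)+\CL(\CF_2H)$. The collinear pairs do average to $\CL(H)$, as you say. The independent pairs, however, average to $\int H+\CL(\CF_2H)$. The extra term comes from expanding the congruence $p\mid\omega(v,w)$ in the Hecke average into characters and Poisson-summing. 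It carries exactly what your bookkeeping cannot produce. Its diagonal part $a=b$ is $P_{(123)}$. Its off-diagonal part is precisely the $U_{S_{ab}}P_{(12)}$ half of $\CD_3$, via $\int(\CF_2H)(az,bz)\,dz=[\int H(bz,(a-b)z)\,dz]\,P_{(12)}$ and the reindexing $(a,b)\mapsto(b,a-b)$.

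Concretely, your assembly yields only five permutations:
\begin{itemize}
\item $I$ from rank zero;
\item $P_{(13)},P_{(23)},P_{(12)}$ from the lines $b=0$, $a=0$, $a+b=0$;
\item a single $3$-cycle $P_{(132)}=\int H$.
\end{itemize}
The last one appears only if you keep the cocycle $e^{-i\pi\omega(\lambda_1,\lambda_2)}$ that the stabilizer phases leave after the $z$-average. Your displayed sum drops it, and without it the rank-two integral is not a permutation operator. In any case, one integral cannot supply both $3$-cycles.

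Symmetrizing over $(a,b)\leftrightarrow(b,a)$ does not create the factor $I+P_{(12)}$ either. Since $r_{ba}$ is $r_{ab}$ with its first two entries swapped, $U_{S_{ba}}=P_{(12)}U_{S_{ab}}P_{(12)}$, so the symmetrized sum is still a combination of the $U_{S_{ab}}$ alone. The $\tfrac12$ in $\CD_3$ instead comes from $(a,b,z)\sim(-a,-b,-z)$. Normalize the rank-one weight so that the line $b=0$ gives exactly $P_{(13)}$. Then your route ends at $M_3=6\Pi_3-P_{(123)}+\frac{1}{2\zeta(2n)}\sum_{a,b,a+b\in\Z^{\times}}Q_{ab}^{-n}U_{S_{ab}}$, missing both $P_{(123)}$ and half of the defect.

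Your evaluation of $\int H(az,bz)\,dz$ as a rotated parity is fine, and so are the Epstein-$\zeta$ asymptotics. Note, though, that $\CD_3$ need not be positive. The norm identity follows instead from the triangle inequality (each $U_{S_{ab}}$ and $P_{(12)}$ is unitary) together with attainment on $|0\rangle^{\otimes 3}$.
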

For $r_{a,b}=(a,b,-a-b)^\intercal \in {\sf A}_2$, the replica reflection 
and its phase-space action are given by
\begin{equation}
 R_{a,b}=I_3-\frac{2r_{a,b}r_{a,b}^\intercal }{r_{a,b}^\intercal r_{a,b}},\qquad
 S_{ab}=R_{a,b}\oplus R_{a,b} \in\Sp_6(\R)\cap\SO_6(\R),
\end{equation}
where the symplectic map $S_{ab}$ acts as a Gaussian unitary transformation $U_{S_{ab}}\psi(q)=\psi((R_{a,b}\otimes I_n)q)$.
 GKP codes \cite{Gottesman_2001} are quantum stabilizer codes for an infinite-dimensional separable Hilbert space $\CH=L^2(\R^{n})$, where the stabilizer group is chosen as a subgroup of the unitary representation of the Heisenberg group $H(\R^n)$ via displacement operators
\begin{align}
    D:\R^{2n}&\to U(\CH),\quad D(x)D(y)=e^{-i\pi\omega(x,y)}D(x+y),\label{eq:disp_intro}
\end{align}
for a symplectic form $\omega:\, \R^{2n}\times\R^{2n}\to \R$. A GKP code is presented via a lattice basis $M=(z_1,\hdots,z_{2n})^\intercal $ for a symplectically integral lattice, which labels commuting displacements and a choice of stabilizer phases that together fix a finite-dimensional subspace $\CC_M\subset\CS'(\R^n)$. 

While the fixed subspace (the code space) $\CC_M\subset\CS'(\R^n)$ is unphysical, approximations of $\CC_M$ in $L^2(\R^n)$ have been shown to inherit a sufficient amount of structure of $\CC_M$ to be useful in the realization of quantum error correction in theory as well as experiments, e.g.~\cite{Sivak_2023}. They have further found application in quantum sensing and tomography; we refer the curious reader to the large body of literature on GKP codes for more, see e.g.~\cite{Brady_2024} and~\cite{conradPhD}. In this work, we focus on their application as quantum designs, studied in~\cite{Conrad_2025}, which shows that for the set of all unimodular GKP distributions $G_{\Lambda, s;M}$, for which $\Lambda=\Lambda^{\perp}$ is a symplectic lattice\footnote{Here and in the remainder of the paper, we understand that the dual $\Lambda^\perp$ is taken with respect to the symplectic bilinear form $\omega$ on the ambient vector space, so that a symplectic lattice satisfies $\Lambda \subseteq \Lambda^\perp$ by definition.}  and $s\in \C^n/\Lambda$ is a modular shift, satisfies that, for $t=2$,
\begin{equation}
    M_t = \int_{\chi_n} G_{X}^{\otimes t}  dX = a_t\Pi_t.
\end{equation}
Here the integration is carried out uniformly over the set $\chi$ of symplectic lattices $Y_n=\Sp_{2n}(\Z)\setminus \Sp_{2n}(\R)$ and shifts $s$ over the fundamental domain for each lattice, and the total expression is understood weakly, by pairing with Schwartz operators~\cite{Keyl_2016}. The right-hand side of this equality is given by the projector $\Pi_t$ onto the symmetric subspace of the tensor product. By the structure of $G_M$, the expression under the integral is given by 
\begin{align}
     \int_{\chi_n} G_{X}^{\otimes t} dX&= \int_{Y_n}\int_{\CP(\Lambda)}  \sum_{x\in \Z^t\otimes \Lambda} \lrc{ e^{\Phi_M(x)} }\lrq{e^{-i2\pi\omega(s,\sum_i x_i)}} D(x)\, ds d\mu_n(\Lambda),\nonumber\\
    &= \int_{Y_n} \sum_{x\in {\sf A}_{t-1}\otimes \Lambda} \lrc{ e^{\Phi_M(x)} } D(x) d\mu_n(\Lambda),
\end{align}
where $\Phi_M$ collects the stabilizer phases. For $t=2,3 $ the expression in the curly brackets becomes independent of the lattice basis $M$; the resulting integrand then becomes a Siegel transform.
\\
First-moment integrals of such transforms over the spaces of lattices with covolume 1 were studied in~\cite{Siegel_1945} and have been extended to higher moments by~\cite{Rogers1955}. Subsequently, the case of first or second moments over the space $Y_n$ of symplectic lattices has been treated in~\cite{Buser1994}, \cite{athreya2018discrepancygeneralsymplecticlattices}, \cite{Kelmer_2019}, and most recently in the work ~\cite{Kim_Holm} that uses an approach based on Hecke operators (cf. \cite[Lemma 7]{Kim_flags}) to give two explicit formulas for the integrals of rank-2 Siegel transforms over $Y_n$: Up to the conditions on the test functions, \cite[Thm. 1.4]{Kim_Holm} considers a general $f : \mathbb{R}^{2n} \times \mathbb{R}^{2n} \rightarrow \C$ and expresses the integral of the associated Siegel transform with the Radon transform; and \cite[Thm. 1.5]{Kim_Holm} considers the special case $f = f_1 \otimes f_2$ for $f_1, f_2 : \mathbb{R}^{2n} \rightarrow \C$ and utilizes the Fourier transform.

In this paper we give a self-contained derivation of a second-moment formula (Theorem \ref{thm:2nd_moment}) that lies somewhere between these two results (particularly, a Fourier-theoretic second-moment formula for test functions not necessarily of the form $f_1 \otimes f_2$). We apply this version of the formula to compute the third-moment defect of the GKP ensemble. Our proof uses the Hecke operator-based approach of \cite{Kim_Holm}, and essentially proceeds along the same path. At the same time, it differs by virtue of pursuing an alternative argument at one of the steps of the strategy, namely by using the distribution of conditional Lagrangian subspaces for $\F_p^{2n}$ in the $\F_p$-rank analysis of pairs $(x_1, x_2) \in \Z^{2n} \times \Z^{2n}$ (cf. \cite[Sect. 3]{Kim_Holm}). This, in effect, makes it more straightforward to work with the full Hecke average over all relevant cosets of \eqref{eq:Hecke_congruence_subgroup_yolo}, rather than a large subset of these cosets (cf. \cite[Sect. 5]{Kim_Holm}).

Our reason for giving a full proof of the symplectic second-moment formula is twofold: One reason is that the modifications discussed above are best explained in the context of a self-contained proof. The other reason is expository and aims to communicate and explain the mechanism of Hecke operators and their applications to mathematical physics. Hecke operators are ubiquitous in number theory and geometry and have also recently found further application in complexity theory~\cite{deBoerSIVP}.

We leave mathematical details of the Hecke operator-based approach to deriving moment formulae for GKP designs to later sections, but for now, we briefly sketch their interpretation in the context of GKP codes:
Let $\Lambda\in Y_n$ be a symplectic lattice. The associated code space has dimension $\sqrt{|\Lambda^{\perp}/\Lambda|}=1$. To realize a less trivial code space, one may scale by a factor $\sqrt{p}$ (for an integer $p > 1$) to obtain $\Lambda_p=\sqrt{p}\Lambda$ with $\Lambda_p^{\perp}/\Lambda_p \simeq \Lambda/p\Lambda$. For $p$ prime, this realizes a representation of $H(\F_p^n)$ on the associated code space, with underlying symplectic vector space $(\F_p^{2n}, \omega)$. In quantum information terms, we have obtained a collection of $n$ $p$-dimensional qudits. A Lagrangian subspace of $\F_p^{2n}$ determines a maximal commuting family of logical displacements, and fixing one of its $p^n$ stabilizer characters specifies a unique state projectively. In the quantum information literature, such states are known as stabilizer states.
${\rm Lag}(\Lambda/p\Lambda)$ is the collection of all Lagrangian subspaces that can be obtained from $\Lambda\in Y_n$. A Hecke operator $T_p$ averages over all the corresponding stabilizer lattices,
\begin{align}
    T_p F(\Lambda) &= \frac{1}{|{\rm Lag}(\Lambda/p\Lambda)|} \sum_{W\in {\rm Lag}(\Lambda/p\Lambda)} F(p^{-1/2} M_W),\nonumber\\
    M_W &= \lrc{x\in \Lambda:\, x \text{ (mod $p \Lambda$)} \in W }.
\end{align}
Under this average, $\Lambda_W=p^{-1/2} M_W$ again is a symplectic lattice. 
For the Siegel transforms considered below, we compute the pointwise limit of $T_p$ as $p\to\infty$ and identify it with the Haar mean.

\subsection*{Organization}
With this manuscript, we aim to provide a self-contained introduction into the topic of moment formulae for functions on the space of lattices and the application of Hecke operators in their derivation. This introduction is particularly aimed at non-experts in number theory and attempts to be self-contained and pedagogical. In this spirit, we spend section~\ref{sec:Hecke} on deriving $1^{\rm st}$- and $2^{\rm nd}$ moment formulae over the space of symplectic lattices $Y_n$ in a step-by-step fashion. 
Section~\ref{sec:designs} introduces (rigged) quantum $t$-designs, GKP distributions, explains the previous results of~\cite{Conrad_2025} and establishes Theorem~\ref{thm:main}. Readers familiar with the methods of the first part may safely skip to this section after peeking at the statement of Theorem~\ref{thm:2nd_moment}.

\section{Hecke operators and the Siegel measure}\label{sec:Hecke}

Let $n\geq 1$ and consider the symplectic form
\begin{equation}
    J=\begin{pmatrix} 0 & I_n \\ -I_n & 0 \end{pmatrix},
\end{equation}
and the groups
\begin{equation}
    G=\Sp_{2n}(\R),\quad \Gamma=\Sp_{2n}(\Z), \quad Y_n=\Gamma\setminus G.
\end{equation}
Since $\Gamma\subset G$ is a lattice, $Y_n$ carries a unique right-$G$-invariant probability measure $\mu_n$, so $\mu_n(Ah)=\mu_n(A)$ for $h\in G$.
A symplectic lattice is given by $\Lambda=\Z^{2n}g\subset\R^{2n}$ for $g\in G$. Equivalently, it is a full lattice satisfying $\omega(\Lambda,\Lambda)\subseteq\Z$ and ${\rm vol}(\Lambda)=1$. 
The group $G$ acts transitively on these lattices, with $\Gamma$ stabilizing the standard lattice $\Z^{2n}$.
\medskip \\
For a $2n$-dimensional symplectic vector space $(V, \omega)$, a Lagrangian subspace -- or polarization --  $L\subset V$ is an $n$-dimensional subspace such that $\omega(L, L)=0$, and we denote by $\Lag(V)$ the collection of all Lagrangian subspaces.
\medskip \\
For a prime $p$, consider $V=\F_p^{2n}$ and let $L_0=\F_p^n\oplus\{0\}^n$. Since $G_p=\Sp_{2n}(\F_p)$ acts transitively on $\Lag(V)$, every Lagrangian is $L=L_0g_p$ for some $g_p\in G_p$.
Because 
\begin{equation}
    (x,\,0)
    \begin{pmatrix}
        A & B \\ C & D 
    \end{pmatrix}
    = (xA,\, xB)
\end{equation}
we have that 
\begin{equation}{\rm Stab}_{G_p}(L_0)=
    \lrc{\begin{pmatrix} A & 0 \\ C & D\end{pmatrix}
    \in G_p},
\end{equation}
and each Lagrangian subspace is fully specified by an element of ${\rm Stab}_{G_p}(L_0)\setminus G_p$.
We set $a_p=p^{-\frac{1}{2}}(I_n\oplus pI_n)$ and define the Hecke congruence subgroup
\begin{equation}\label{eq:Hecke_congruence_subgroup_yolo}
 \Gamma_p =\Gamma \cap a_p^{-1}\Gamma a_p =\lrc{\begin{pmatrix} A & B \\ C & D\end{pmatrix}\in \Gamma, \, B \equiv 0 \text{ (mod $p$)}}. 
\end{equation}
Reduction modulo $p$ implements a surjection $\rho_p:\Gamma\twoheadrightarrow G_p$, for which the inverse image of ${\rm Stab}_{G_p}(L_0)$ is $\Gamma_p$, with
\begin{equation}
        \Gamma_p\backslash\Gamma\simeq\Lag(\F_p^{2n}),\qquad
    \Gamma_p\gamma\longmapsto L_0\rho_p(\gamma)
\end{equation}
%
%
%
For a module $X$ we write $X^{\perp}$ for the symplectic dual relative to $\omega$. Observe that for a symplectic lattice $\Lambda=\Lambda^{\perp}\in Y_n$, we have
\begin{equation}
    (p^{\frac{1}{2}}\Lambda)^{\perp}/(p^{\frac{1}{2}}\Lambda)\simeq  V_p\coloneqq \Lambda/p\Lambda \equiv \F_p^{2n},
\end{equation}
where the symplectic form of $\R^{2n}\supset \Lambda$ descends to the symplectic form on $\F_p^{2n}$. 
Let $W\in \Lag({V_p})$ and define 
\begin{equation}
    M_W =\{v\in \Lambda:\, \overline{v}\in W\},\quad \Lambda_W=p^{-\frac{1}{2}}M_W,
\end{equation}
where $\overline{v}\coloneqq v$ (mod $p\Lambda$). 
\begin{lemma}
    For every Lagrangian $W\subset V_p$ we have
\begin{equation}
    [\Lambda : M_W]=p^n,\quad M_W^{\perp}=p^{-1}M_W,\quad \Lambda_W^{\perp}=\Lambda_W.
\end{equation}
In particular, $\Lambda_W\in Y_n$.
\end{lemma}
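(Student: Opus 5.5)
The plan is to work entirely through the reduction map $\pi:\Lambda\to V_p=\Lambda/p\Lambda$, which is a surjective group homomorphism with kernel $p\Lambda$. Since $M_W=\pi^{-1}(W)$, the index is read off from the induced isomorphism $\Lambda/M_W\simeq V_p/W$. Because $W$ is Lagrangian, $\dim_{\F_p}W=n$, so $|V_p/W|=p^{2n}/p^n=p^n$. This gives $[\Lambda:M_W]=p^n$.

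For the duality statement we first prove the inclusion $p^{-1}M_W\subseteq M_W^\perp$. Take $v,w\in M_W$. Their reductions $\bar v,\bar w$ lie in $W$, and the symplectic form on $V_p$ is $\omega$ reduced mod $p$, so $\omega(v,w)\equiv 0 \pmod p$. Hence $\omega(p^{-1}v,w)\in\Z$, which gives the inclusion.

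For the reverse inclusion we compare covolumes instead of arguing elementwise. Since $\Lambda$ is unimodular and $[\Lambda:M_W]=p^n$, we have $\mathrm{covol}(M_W)=p^n$. Duality with respect to the nondegenerate form $\omega$ inverts covolume, because $\omega$ has determinant $1$ in the standard coordinates and $\Lambda=\Z^{2n}g$ with $g\in G$. So $\mathrm{covol}(M_W^\perp)=p^{-n}$. On the other side, $\mathrm{covol}(p^{-1}M_W)=p^{-2n}\cdot p^n=p^{-n}$. Two lattices with equal covolume, one contained in the other, are equal, so $M_W^\perp=p^{-1}M_W$. The one point to verify is that the symplectic dual really satisfies $\mathrm{covol}(X^\perp)=\mathrm{covol}(X)^{-1}$. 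This follows by writing $X=\Z^{2n}h$, so that $X^\perp=\Z^{2n}(J h^{\intercal})^{-1}$ up to sign conventions, and using $\det J=1$.

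Finally, $\Lambda_W=p^{-1/2}M_W$ and $(cX)^\perp=c^{-1}X^\perp$. Therefore
\begin{equation}
\Lambda_W^\perp=p^{1/2}M_W^\perp=p^{1/2}p^{-1}M_W=p^{-1/2}M_W=\Lambda_W.
\end{equation}
The covolume of $\Lambda_W$ is $p^{-n}\cdot p^n=1$. A self-dual lattice automatically satisfies $\omega(\Lambda_W,\Lambda_W)\subseteq\Z$, so by the characterization given above $\Lambda_W$ is a symplectic lattice, i.e. $\Lambda_W\in Y_n$. If one wants an explicit $g$, one can instead note that after choosing $\gamma\in\Gamma$ with $L_0\rho_p(\gamma)=W$ in the basis of $\Lambda$, one gets $\Lambda_W=\Z^{2n}a_p\gamma g$, which again exhibits $\Lambda_W$ in $Y_n$.

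The only step that needs any care is the covolume and duality bookkeeping. Everything else is linear algebra over $\F_p$.
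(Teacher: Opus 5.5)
Your proof is correct. The index computation is the same as the paper's, but you handle the reverse inclusion $M_W^\perp\subseteq p^{-1}M_W$ differently.

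The paper argues elementwise. Since $p\Lambda\subseteq M_W$ and $\Lambda^\perp=\Lambda$, it gets $M_W^\perp\subseteq p^{-1}\Lambda$. Then for $y\in\Lambda$ it uses the chain $p^{-1}y\in M_W^\perp\Leftrightarrow\omega(y,M_W)\subseteq p\Z\Leftrightarrow\overline{y}\in W^\perp=W$. This single chain of equivalences gives both inclusions at once. It rests on the $\F_p$-linear fact $W^\perp=W$, which packages the nondegeneracy of the reduced form together with the maximality of $W$.

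You instead use only the isotropy of $W$ elementwise. Maximality enters through $\dim W=n$ in the index count. You then close the remaining gap with the covolume identity $\mathrm{covol}(X^\perp)=\mathrm{covol}(X)^{-1}$ and the ``containment plus equal covolume'' principle.

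The paper's route is purely algebraic and needs no determinant or volume bookkeeping. Yours is a standard lattice-counting argument, and it has a side benefit: it makes $\mathrm{covol}(\Lambda_W)=1$ explicit. The paper leaves this implicit when it deduces $\Lambda_W\in Y_n$ from self-duality.

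Your alternative description $\Lambda_W=\Z^{2n}a_p\gamma g$ is also correct. It is exactly the projection $\pi_2(\Gamma_p h)=\Gamma a_p h$ with $h=\gamma g$ that the paper uses right after the lemma. It exhibits $\Lambda_W\in Y_n$ directly, without appealing to the integral-and-covolume-one characterization, which itself relies on the existence of symplectic bases over $\Z$.
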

\begin{proof}
    The first identity follows from $\Lambda/M_W \equiv V_p/W$, which has dimension $n$ over $\F_p$. Since $p\Lambda\subset M_W$ we also have $M_W^{\perp}\subset p^{-1}\Lambda$. Let $y\in \Lambda$, such that $p^{-1}y\in M_W^{\perp}$. This implies
    \begin{equation}
        p^{-1}y\in M_W^{\perp} \Leftrightarrow \omega(y, M_W)\in p\Z \Leftrightarrow \overline{y}\in W^{\perp}=W.
    \end{equation}
    Therefore $y \in M_W$, and $M_W^{\perp}=p^{-1}M_W$, which also implies $\Lambda_W^{\perp}=\Lambda_W$.
\end{proof}

Let $N_n(p)=|\Lag(\F_p^{2n})|$ and consider an $r$ dimensional sublagrangian $U\subseteq W$. We have $W/U \subseteq U^{\perp}/U$. Since $U$ is isotropic, $\dim (U^{\perp}/U)=\dim U^{\perp}-\dim U= 2n-r -r=2(n-r) $. Therefore 
\begin{equation}
    |\{W\in \Lag(V_p), U\subseteq W\}|=N_{n-r}(p).
\end{equation}
In particular we consider for $r=1$ a line $\ell \subset W \subseteq V_p$ and count the number of such pairs $(\ell, W)$.
There are $(p^{2n}-1)/(p-1)$ choices for $\ell$ in $V_p$ before fixing $W$, but once $W$ is fixed there remain only $(p^{n}-1)/(p-1)$ choices for $\ell$. Hence we have
\begin{equation}
    N_n(p)\frac{p^n-1}{p-1} = \frac{p^{2n}-1}{p-1}N_{n-1}(p) \Leftrightarrow \frac{N_n(p)}{N_{n-1}(p)}=p^n+1.
\end{equation}
Iterating this equation finally yields 
\begin{eqnarray}
    N_n(p)=\prod_{i=1}^n(p^i+1).
\end{eqnarray}
In particular, for an isotropic $U$ of dimension $1$ or $2$, the probability for a random $W\subset V_p$ to contain $U$ is
\begin{align}
    \alpha_1=\frac{N_{n-1}(p)}{N_n(p)}&=\frac{1}{p^n+1},\quad \alpha_2=\frac{1}{(p^n+1)(p^{n-1}+1)},\nonumber\\ &\alpha_1-\alpha_2 = p^{n-1}\alpha_2.
\end{align}

For a prime $p$, the (symplectic) Hecke operator is defined as 
\begin{equation}
    (T_p\Phi)(\Lambda) = \frac{1}{N_n(p)}\sum_{W\in \Lag(\Lambda/p\Lambda)}\Phi(\Lambda_W)
\end{equation}
and satisfies 
\begin{equation}
    \int_{Y_n} \Phi (\Lambda)d\mu_n(\Lambda) = \int_{Y_n} T_p\Phi (\Lambda)d\mu_n(\Lambda).
\end{equation}
To see this, consider the space $X_p = \Gamma_p \setminus G$. Each point in $X_p$ labels a pair $(\Lambda, W)$ of a lattice $\Lambda\in Y_n$ and Lagrangian subspace $W\subset \Lambda/p\Lambda$, given by the two projections
\begin{equation}
    \pi_1(\Gamma_p g) = \Gamma g,\quad \pi_2(\Gamma_p g) =\Gamma a_p g,
\end{equation}
equivalently
\begin{equation}
    \pi_1(\Lambda, W) = \Lambda,\quad \pi_2(\Lambda, W) =\Lambda_W.
\end{equation}
By the coset identification above, the fiber of $\pi_1$ over $\Lambda$ is
\begin{equation}
    \pi_1^{-1}(\Lambda) = \{(\Lambda, W):\; W\in \Lag(\Lambda/p\Lambda)\}
\end{equation}
and $T_p$ realizes exactly the averaging of $\pi_2$ over a fiber of $\pi_1$ and a uniform right $G$-invariant probability measure on $X_p$ is given by
\begin{equation}
    \int_{X_p} F(x) d\nu_p(x) = \int_{Y_n} \frac{1}{N_n(p)}\sum_{x\in \pi_1^{-1}(y)} F(x) d\mu_n(y).
\end{equation}
Set $F=\Phi\circ\pi_2$, then
\begin{equation}
    \int_{X_p} \Phi(\pi_2(x)) d\nu_p(x) = \int_{Y_n} \underbrace{\frac{1}{N_n(p)}\sum_{x\in \pi_1^{-1}(y)}\Phi(\pi_2(x))}_{(T_p\Phi)(y)}d\mu_n(y).
\end{equation}
The measure $\nu_p$ is right $G-$invariant. Let $h\in G$ and let $\Phi$ be a bounded and measurable function. Then using right $G$-equivariance of $\pi_2$ we obtain
\begin{equation}
    \int_{Y_n}\Phi(yh) d((\pi_2)_*\nu_p)(y) = \int_{X_p}\Phi (\pi_2(x)h)d\nu_p(x)=\int_{X_p}\Phi(\pi_2(x))d\nu_p(x),
\end{equation}
Uniqueness of the invariant probability measure gives $(\pi_2)_*\nu_p=\mu_n$. For integrable $\Phi$ this yields for any prime $p>1$
\begin{equation}
\int_{Y_n} T_p\Phi d\mu_n = \int_{X_p} \Phi \circ \pi_2 d\nu_p = \int_{Y_n} \Phi d\mu_n,
\end{equation}
So the uniform measure $\mu_n$ on $Y_n$ is invariant under pushforward by $T_p$. This allows us to evaluate the integrals considered below.


\subsection{Symplectic first moments}
Let $f\in C(\R^{2n})$ be a continuous function with $|f(x)|=O((1+\|x\|)^{-2n-\epsilon})$, $\epsilon>0$. Its Siegel transform is given by
\begin{equation}
    S_f(\Lambda)=\sum_{\lambda \in \Lambda^{\times}}f(\lambda).
\end{equation}
where we used the shorthand $S^{\times} = S\setminus\{0\}$ to denote the exclusion of $\{0\}$ for any set $S$ and the decay assumption guarantees absolute convergence of this series.

\begin{theorem}[Symplectic Siegel mean value,~\cite{Buser1994}]\label{thm:1st_moment}
    Let $f\in C(\R^{2n})$ be a continuous function with $|f(x)|=O((1+\|x\|)^{-2n-\epsilon})$, $\epsilon>0$, then
    \begin{equation}
        \int_{Y_n} S_f(\Lambda)d\mu_n(\Lambda) = \int_{\R^{2n}} f(x) dx.
    \end{equation}
\end{theorem}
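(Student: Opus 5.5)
We prove Theorem~\ref{thm:1st_moment} with the Hecke invariance $\int_{Y_n} T_p\Phi\,d\mu_n=\int_{Y_n}\Phi\,d\mu_n$. The idea is to apply it to $\Phi=S_f$ for every prime $p$, compute $T_pS_f(\Lambda)$ explicitly through the lattices $\Lambda_W$, and show that $T_pS_f(\Lambda)\to\int_{\R^{2n}}f$ as $p\to\infty$. Since the left-hand side is independent of $p$, passing to the limit under the integral will give the result.

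\textbf{Step 1 (unfolding the Hecke average).} Fix $\Lambda\in Y_n$. Since $\Lambda_W=p^{-1/2}M_W$ and $p\Lambda\subseteq M_W\subseteq\Lambda$, we write
\begin{equation}
T_pS_f(\Lambda)=\frac{1}{N_n(p)}\sum_{W}\sum_{v\in M_W^{\times}}f(p^{-1/2}v)=\sum_{v\in\Lambda^{\times}}f(p^{-1/2}v)\,\Pr_W[\overline v\in W].
\end{equation}
We then split according to whether $v\in p\Lambda$. If $v\in p\Lambda$, then $\overline v=0$ lies in every $W$ and the probability is $1$. If $v\notin p\Lambda$, the line $\langle\overline v\rangle$ is isotropic, so the probability is $\alpha_1=1/(p^n+1)$. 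This gives
\begin{equation}
T_pS_f(\Lambda)=\frac{1}{p^n+1}\sum_{v\in\Lambda^{\times}}f(p^{-1/2}v)+\Big(1-\frac{1}{p^n+1}\Big)\sum_{w\in\Lambda^{\times}}f(p^{1/2}w).
\end{equation}

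\textbf{Step 2 (limits).} For the first term we use $\mathrm{vol}(\Lambda)=1$ and a Riemann sum argument: $p^{-n}\sum_{v\in\Lambda}f(p^{-1/2}v)\to\int f$. This holds because the mesh $p^{-1/2}\Lambda$ has cell volume $p^{-n}$, and $p^{-n}/(p^n+1)^{-1}\to1$. Continuity of $f$ together with the decay $O((1+\|x\|)^{-2n-\epsilon})$ justifies the Riemann sum convergence, with the tails controlled uniformly. Removing the single term $v=0$ costs only $f(0)/(p^n+1)\to0$. For the second term, write $\lambda_1(\Lambda)$ for the length of the shortest nonzero vector of $\Lambda$. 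Every $w\in\Lambda^\times$ satisfies $\|p^{1/2}w\|\ge p^{1/2}\lambda_1(\Lambda)$, so the decay gives a bound of order $p^{-n-\epsilon/2}\sum_{w\in\Lambda^\times}\|w\|^{-2n-\epsilon}\to0$ for fixed $\Lambda$. Hence $T_pS_f(\Lambda)\to\int f$ pointwise.

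\textbf{Step 3 (exchanging limit and integral, the main obstacle).} Pointwise convergence is not enough: we need domination on the noncompact space $Y_n$, where $S_f$ is unbounded near the cusp. I would first take $f\ge0$ compactly supported, or bounded by a radial majorant, and then use the standard estimate $S_{|f|}(\Lambda)\ll\sum_{k}\prod_{i\le k}\lambda_i(\Lambda)^{-1}$ in terms of the successive minima. For $f$ compactly supported in a ball $B_R$, the second term is nonzero only when $\lambda_1(\Lambda)\le Rp^{-1/2}$. The first term is bounded by a counting function for the ball $p^{1/2}B_R$ divided by $p^n$, which is $\ll 1+\sum_k\prod_{i\le k}\lambda_i^{-1}$ uniformly in $p$. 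The second term is bounded by a multiple of $S_{\mathbf 1_{B_R}}(p^{1/2}\Lambda)\le S_{\mathbf 1_{B_R}}(\Lambda)$. Both majorants are independent of $p$. Their integrability over $Y_n$ (finiteness of $\int\alpha(\Lambda)\,d\mu_n$, via reduction theory and Siegel sets, or simply the finiteness of $\int S_{\mathbf 1_B}\,d\mu_n$ from an a priori bound) lets dominated convergence give the theorem for compactly supported continuous $f$. We then extend to general $f$ with the stated decay by monotone approximation of $|f|$ with compactly supported functions and linearity, splitting into real and imaginary parts and then positive and negative parts.

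If the successive-minima integrability is to be avoided, an alternative is to note that for $f\ge0$ Fatou's lemma already gives $\int S_f\le\int f$, since the integral of $T_pS_f$ does not depend on $p$. The reverse inequality would then come from applying the same bound to $g-f$ for a compactly supported $g\ge f$.
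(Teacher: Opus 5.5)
Your Steps 1 and 2 are the paper's own: the same unfolding of $T_pS_f$ through $\alpha_1=1/(p^n+1)$, the same closed formula, and the same pointwise limit $T_pS_f(\Lambda)\to c_f=\int_{\R^{2n}}f$.

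\textbf{Step 3 is a genuinely different route.} You pass to the limit by dominated convergence, using a $p$-independent majorant built from successive minima. You then import the integrability of $\alpha$ over $Y_n$ from reduction theory for $\Sp_{2n}(\Z)$. That works, since the fact is true and standard, but it comes at a cost. Because $\alpha$ is comparable to $1+S_{\mathbf 1_B}$ for a suitable ball $B$, this input is essentially the finiteness half of the theorem itself. Your proposal obtains it only by citation; the ``a priori bound'' you mention as an alternative is never supplied.

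The paper needs no such input. For $f\ge0$ it sets $F_K=\min(S_f,K)$. Positivity of $T_p$ and $T_p1=1$ give $T_pF_K\le\min(T_pS_f,K)$. Hecke invariance and bounded convergence on the probability space $Y_n$ then give $\int F_K\,d\mu_n\le\min(c_f,K)$. Monotone convergence in $K$ yields $\int S_f\,d\mu_n\le c_f$, so integrability comes for free. Fatou's lemma then supplies $c_f\le\int S_f\,d\mu_n$. This uses only invariance and pointwise convergence: no successive minima, no Siegel sets, and no reduction to compactly supported $f$. Your route is the classical one and makes the domination explicit, but the paper shows nothing beyond Hecke invariance is needed.

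\textbf{Your fallback paragraph does not work as written.} It is meant to avoid reduction theory, but it misreads Fatou's lemma. Applied to $T_pS_f\ge0$ with $T_pS_f\to c_f$, Fatou gives $c_f\le\liminf_p\int T_pS_f\,d\mu_n=\int S_f\,d\mu_n$. That is the \emph{lower} bound $\int f\le\int S_f\,d\mu_n$, not $\int S_f\,d\mu_n\le\int f$. The proposed repair through $g-f$ is then circular. From $\int(g-f)\le\int S_{g-f}\,d\mu_n$ you get $\int S_f\,d\mu_n\le\int f$ only if you already know $\int S_g\,d\mu_n\le\int g$ for the comparison function $g$, which is the upper bound you are trying to prove. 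The missing ingredient for an upper bound without a dominating function is precisely the truncation argument above.
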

\begin{proof}
First assume $f\geq0$. The previous discussion showed that the integral is invariant under application of the Hecke operator
    \begin{equation}
        \int_{Y_n} S_f(\Lambda)d\mu_n(\Lambda) = \int_{Y_n} T_pS_f(\Lambda)d\mu_n(\Lambda).
    \end{equation}
    We have
    \begin{align}
        T_pS_f(\Lambda) &= \frac{1}{N_n(p)} \sum_{W\in \Lag(\Lambda/p\Lambda)} \sum_{v\in \Lambda_W^{\times}}f(v)
    \end{align}
    Since we have 
    \begin{equation}
        \Lambda_W^{\times} = p^{-\frac{1}{2}} \{v\in \Lambda : \overline{v}\in W^{\times}\} \cup p^{\frac{1}{2}}\Lambda^{\times}
    \end{equation}
    and for any fixed $v\in\Lambda\setminus p\Lambda$ the probability of $\overline{v}\in W^{\times}$ is $\alpha_1(p)=1/(p^n+1)$ as computed earlier, we find
    \begin{equation}
        T_pS_f(\Lambda) = \frac{1}{p^n+1} \sum_{v\in \Lambda^{\times}} f(p^{-\frac{1}{2}}v) + \frac{p^n}{p^n+1} \sum_{u\in \Lambda^{\times}} f(p^{\frac{1}{2}}u).
    \end{equation}
    The second sum is $O(C_{\Lambda,f}p^{-n-\epsilon/2})$, since $\sum_{u\in\Lambda^{\times}}\|u\|^{-2n-\epsilon}<\infty$. The first term yields the Riemann integral 
    \begin{equation}
        \lim_{p\to\infty} T_pS_f(\Lambda) = \int_{\R^{2n}} f(x)dx\eqqcolon c_f.
    \end{equation}

    To pass from pointwise convergence $T_pS_f\to c_f$ to equality of the integrals, let $K>0$ and $F_K={\rm min}(S_f, K)$. $T_p$ is an averaging operator, hence $T_p\geq 0$ and $T_p1=1$. Therefore we also always have $T_pF_K\leq \min(T_pS_f, K)$. Invariance and bounded convergence yield
   \begin{equation}
        \int_{Y_n}F_K\,d\mu_n
        =\int_{Y_n}T_pF_K\,d\mu_n
        \leq\int_{Y_n}\min(T_pS_f,K)\,d\mu_n
        \xrightarrow{p\to\infty}\min(c_f,K).
    \end{equation}
    In the limit $K\to\infty$, we get $\int_{Y_n}S_f(\Lambda)d\mu_n(\Lambda) \leq c_f<\infty$. In particular, each $T_pS_f$ is integrable and positive per assumption. Hence, we can apply Fatou's lemma so to obtain
    \begin{equation}
        c_f=\int_{Y_n}\lim_{p\to\infty} T_pS_f(\Lambda) d\mu_n(\Lambda) \leq \liminf_{p\to\infty} \int_{Y_n} T_p S_f(\Lambda)d\mu_n(\Lambda).
    \end{equation}
    The right-hand side equals $\int_{Y_n}S_f\,d\mu_n$, hence equality holds. Applying this result to $|f|$ proves absolute integrability and finally splitting the real and imaginary parts of $f$ into their positive and negative parts proves the general case.
\end{proof}

\subsection{Symplectic second moments}
We are also interested in the equivalent expressions for continuous two-variable functions $H:\, \R^{2n}\times \R^{2n}\to \C$. If $H\in C(\R^{2n}\times \R^{2n})$ is such that, for $\epsilon>0$, $|H(x, y)|= O((1+\|x\|+\|y\|)^{-4n-\epsilon})$ the generalized Siegel transform
\begin{equation}
    \Theta_H^{\times}(\Lambda)=\sum_{v,w \in \Lambda^{\times}}H(v,w)
\end{equation}
converges absolutely. We define for a function $G\in C(\R^{2n}\times \R^{2n})$ with $|G(x, y)|= O((1+\|x\|+\|y\|)^{-4n-\epsilon})$ the series of integrals
\begin{equation}
    \CL(G)\coloneqq \frac{1}{2\zeta(2n)}\sum_{a, b\in \Z^{\times}} \int_{\R^{2n}} G(az, bz) dz,\label{eq:def_CL}
\end{equation}
as well as the partial Fourier transform
\begin{equation}
    (\CF_{2}H)(x, \eta) = \int_{\R^{2n}} H(x,y) e^{-i2\pi \omega(y, \eta)}dy.
\end{equation}

\begin{theorem}\label{thm:2nd_moment}
    Let $n\geq 2$ and $H\in C(\R^{2n}\times \R^{2n})$ be a continuous function with \begin{equation}|H(x, y)|+|\CF_2H(x, y)|\leq  C(1+\|x\|+\|y\|)^{-4n-\epsilon}\label{eq:bound_H}
    \end{equation} for $\epsilon >0$ and some constant $C>0$.
    We have
    \begin{equation}
        \int_{Y_n}\Theta_H^{\times}(\Lambda) d\mu_{n}(\Lambda) = \int_{\R^{2n}\times \R^{2n}} H(x,y) dxdy + \CL(H) + \CL(\CF_2H).
    \end{equation}
\end{theorem}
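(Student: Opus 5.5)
We follow the same Hecke-operator strategy as in the proof of Theorem~\ref{thm:1st_moment}. First assume $H\ge0$ (the general case comes at the end). By invariance of $\mu_n$ under $T_p$,
\begin{equation}
\int_{Y_n}\Theta_H^{\times}\,d\mu_n=\int_{Y_n}T_p\Theta_H^{\times}\,d\mu_n .
\end{equation}
So it suffices to compute the pointwise limit of $T_p\Theta_H^{\times}(\Lambda)$ as $p\to\infty$, and then justify exchanging limit and integral by truncation.

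\textbf{Splitting into rank classes.} For fixed $\Lambda$ we split the double sum over $(v,w)\in\Lambda_W^{\times}\times\Lambda_W^{\times}$. Write $v=p^{-1/2}x$, $w=p^{-1/2}y$ with $x,y\in M_W\setminus\{0\}$. We classify pairs $(x,y)\in\Lambda\times\Lambda$ by their reductions $(\bar x,\bar y)$ in $V_p$.

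\begin{enumerate}
\item[(i)] \emph{$\bar x,\bar y$ linearly independent.} The span $U$ of $\bar x,\bar y$ is a $2$-plane. If $\omega(\bar x,\bar y)\neq0$ it lies in no Lagrangian. If $U$ is isotropic, the probability that $U\subseteq W$ is $\alpha_2$.
\item[(ii)] \emph{Both nonzero and proportional.} The pair is kept with probability $\alpha_1$.
\item[(iii)] \emph{One or both reductions vanish.} These are terms with $x\in p\Lambda$ or $y\in p\Lambda$.
\end{enumerate}

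Averaging over $W$ gives $T_p\Theta_H^{\times}(\Lambda)$ as a sum of explicit contributions.

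\begin{itemize}
\item[(a)] $\alpha_2\sum H(p^{-1/2}x,p^{-1/2}y)$ over pairs with isotropic independent reductions, i.e.\ $\omega(x,y)\in p\mathbb Z$.
\item[(b)] $\alpha_1\sum$ over pairs with $\bar y=c\bar x$, $c\in\F_p^{\times}$.
\item[(c)] The terms of (iii). Here one coordinate lies in $\sqrt p\,\Lambda^\times$, so for $n\ge2$ the decay gives a contribution $O(p^{-n-\epsilon/2})\cdot\alpha_1 p^{2n}\cdot p^{-n}$. We must check this carefully; the factor $p^n/(p^n+1)$ from $x\in p\Lambda$ multiplies $\sum_{x'\in\Lambda}\sum_{y}$, which has one coordinate of size $\sqrt p$. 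We expect it to vanish.
\end{itemize}

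\textbf{Evaluating (a).} Detect the congruence $\omega(x,y)\equiv0\pmod p$ with additive characters:
\begin{equation}
\mathbf 1_{p\mid\omega(x,y)}=\frac1p\sum_{k\in\mathbb Z/p}e^{2\pi i k\,\omega(x,y)/p}.
\end{equation}
The $k=0$ term gives $\alpha_2 p^{-1}\sum_{x,y}H(p^{-1/2}x,p^{-1/2}y)$. Since $\alpha_2p^{-1}\sim p^{-2n}$, this is a Riemann sum converging to $\int H\,dx\,dy$ (the proportional pairs removed here are negligible at this scale).

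For $k\neq0$, apply Poisson summation in $y$ over $\Lambda=\Lambda^\perp$. This turns $\sum_y H(p^{-1/2}x,p^{-1/2}y)e^{2\pi i k\omega(x,y)/p}$ into $p^{n}\sum_{\mu\in\Lambda}\CF_2H(p^{-1/2}x,\,p^{1/2}\mu+kp^{-1/2}x)$, up to sign conventions. Only those terms survive in the limit where $p^{1/2}\mu+kp^{-1/2}x$ stays bounded, i.e.\ $x\equiv -k^{-1}p\mu$ along lines. Reorganising by primitive vectors $z\in\Lambda$ and integer multiples $a z$, $b z$, these terms become a Riemann sum over $p^{-1/2}\Lambda$ of $\CF_2H(az,bz)$ summed over $a,b\neq0$. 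The number of primitive vectors on the relevant scale produces the factor $1/\zeta(2n)$, and the pairing $\pm$ produces the factor $1/2$. This yields $\CL(\CF_2H)$.

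\textbf{Evaluating (b).} The analogous reorganisation writes pairs with $\bar y=c\bar x$ as $x=az$, $y=bz$ with $z$ primitive. Multiplicities with $p\nmid ab$ give, via the density $1/\zeta(2n)$ of primitive vectors and $\alpha_1\sim p^{-n}$, the term $\frac{1}{2\zeta(2n)}\sum_{a,b}\int H(az,bz)\,dz=\CL(H)$. We must reconcile this with the exact bookkeeping: non-primitive $x$, and the overlap with (a) when $\omega(x,y)=0$ identically.

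\textbf{Main obstacle and passage to the integral.} The main obstacle is the uniform control of the Poisson-dual tail in (a). For $k\neq0$, we must show that all terms with $p^{1/2}\mu+kp^{-1/2}x$ large are negligible after summing over $k$ and $x$. This is where the bound on $\CF_2H$ in \eqref{eq:bound_H} and $n\ge2$ are needed. We would try first to bound the tail by a lattice-point count in the shells $\{x:\|x+k^{-1}p\mu\|\sim R\}$ combined with the decay $R^{-4n-\epsilon}$.

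Given the pointwise limit $T_p\Theta_H^{\times}\to c_H:=\int H+\CL(H)+\CL(\CF_2H)$, the passage to the integral mirrors the first-moment proof. Truncation $F_K=\min(\Theta^\times_H,K)$ together with $T_p$-invariance gives $\int\Theta_H^\times\le c_H$, and Fatou gives the reverse inequality. The positivity needed here is delicate, since $\CF_2H$ need not be positive. We handle it by bounding $|H|$ by a positive majorant of tensor type and then extending by linearity, after splitting $H$ into real and imaginary positive and negative parts.
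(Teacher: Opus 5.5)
Your overall architecture matches the paper's: Hecke invariance, rank classes weighted by $\alpha_1,\alpha_2$, additive characters with Poisson summation, and truncation plus Fatou for $H\ge0$. The step that actually produces $\CL(H)$ and $\CL(\CF_2H)$, however, has a genuine gap. In (b) you write every pair with $\bar y=c\bar x$ as $(az,bz)$ with $z$ primitive, and in (a) you assume the surviving Poisson-dual pairs $(x,\,p\mu+kx)$ are multiples of a common vector. Proportionality modulo $p$ does not imply collinearity over $\R$. There are pairs $v,w\in\Lambda^{\times}$ with $\bar w\in\F_p^{\times}\bar v$ that are linearly independent over $\R$ at exactly the relevant scale $\|v\|+\|w\|\asymp\sqrt p$. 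For $\Lambda=\Z^{2n}$, take pairs in a coordinate plane with $2\times2$ determinant $p$ and entries of size about $\sqrt p$; a count modulo $p$ suggests about $p$ of them. Nothing in the decay suppresses these; only the weight $\alpha_1-\alpha_2=p^{n-1}\alpha_2\asymp p^{-n}$ does. You therefore need a deterministic, uniform-in-$p$ count for fixed $\Lambda$ showing that these pairs contribute $O(p)$ before weighting. The paper does this for its term $I_p(H)$. It passes to coefficient vectors and observes that the rows $(k_i,m_i)\in\Z^2$ of the $2n\times2$ coefficient matrix all lie in one index-$p$ sublattice $M_\ell$. A Voronoi-cell argument then bounds $|M_\ell\cap B_L|\le4\pi L^2/p$ for each contributing line $\ell$, giving $N_p(R)\le C_\Lambda(p+1)(R^2/p)^{2n}$ and hence $O(p^{1-n})$ after weighting. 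This is the crux of the proof and one of the places where $n\ge2$ is needed; it is also needed for the summability of $(|a|+|b|)^{-2n}$ when exchanging $\lim_p$ and $\sum_{a,b}$. The tail you flag as the main obstacle is harmless: it follows from the decay of $\CF_2H$ and the elementary bound $\sum_{v\in\Lambda}(A+\|v\|/\sqrt p)^{-s}\le Cp^nA^{2n-s}$. You are also missing the exact reindexing that puts (a) and (b) on the same footing. For $\bar x\neq0$, the map $(k,\mu)\mapsto z=p\mu+kx$ with $k\in\F_p^{\times}$, $\mu\in\Lambda$ is a bijection onto $\{z\in\Lambda:\bar z\in\F_p^{\times}\bar x\}$. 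So, up to negligible boundary terms, the $k\neq0$ part of (a) is $p^{n-1}\alpha_2\sum\CF_2H(x/\sqrt p,z/\sqrt p)$ over $\F_p$-proportional pairs of nonzero residues. That is the same rank-one sum as in (b) with $H$ replaced by $\CF_2H$, so a single collinear/non-collinear split and a single counting estimate handle both terms.

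The extension to complex $H$ also needs repair. Positivity of $\CF_2H$ is never needed: for $H\ge0$, truncation and Fatou only use $\Theta^{\times}_H,T_p\Theta^{\times}_H\ge0$, and the pointwise limit holds whatever the sign of $\CF_2H$. But splitting $H$ into the positive and negative parts of its real and imaginary parts is not legitimate. Those pieces no longer satisfy the decay hypothesis on their partial Fourier transforms, so neither the pointwise limit nor the formula is available for them. Likewise, any majorant you feed into the nonnegative case must itself satisfy the hypotheses. The paper uses a majorant only to get integrability: $|H|\le C\sum_j2^{-sj}G_{2^j}$ with Gaussians $G_R$, each of which satisfies the hypotheses and has $\int\Theta^{\times}_{G_R}\,d\mu_n\le(1+c_n)R^{4n}$, so $\Theta^{\times}_{|H|}\in L^1$. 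Positivity, $T_p1=1$ and stationarity then give uniform integrability of $\{T_p\Theta^{\times}_H\}_p$. Vitali's theorem, applied to $H$ itself, turns the pointwise limit into the identity of integrals.
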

\begin{proof}
    The proof proceeds similarly to the previous one with more terms to keep track of. Let $H_p[v,w]=H(v/\sqrt{p}, w/\sqrt{p})$ and define
    \begin{align}
        A_p(H)&=\sum_{v,w\in\Lambda^{\times}}  H_p[v,w],\quad 
        &&C_p(H)=\sum_{\substack{v,w\in\Lambda^{\times} \\ p|\omega(v,w)}} H_p[v,w], \nonumber \\ 
        R_p(H)&=\sum_{\substack{v,w\in\Lambda^{\times} \\\dim_{\F_p}\langle \overline{v}, \overline{w}\rangle=1}} H_p[v,w], \quad
        &&Z_p(H)=\sum_{v,w\in p\Lambda^{\times}}H_p[v,w], \nonumber\\
        Q_p(H) &= \sum_{\substack{v,w\in\Lambda^{\times} \\ \dim_{\F_p}\langle \overline{v}, \overline{w}\rangle=2\\ p|\omega(v,w)}} H_p[v,w],
    \end{align}
    where $C_p(H) = Q_p(H)+R_p(H)+Z_p(H)$.
    For $a, b \in V_p$, let $\beta_p(a,b)$ be the probability that a random Lagrangian contains both vectors. Then
    \begin{equation}
        \beta_p(a,b)=\begin{cases}
            1,& a=b=0 \\ \alpha_1,& \dim{\rm span}(a,b) =1, \\ \alpha_2,& \dim{\rm span}(a,b) =2\, \wedge\, \omega(a,b)=0 \\ 0 &  \dim{\rm span}(a,b) =2\,  \wedge\, \omega(a,b)\neq 0
        \end{cases}
    \end{equation}
    and, as before, 
    \begin{align}
        T_p\Theta_H^{\times}(\Lambda) &= \sum_{v,w \in \Lambda^{\times}} \beta_p(\overline{v},\overline{w})H_p[v,w] \nonumber\\ 
        &= \alpha_2Q_p(H) + \alpha_1 R_p(H)+ Z_p(H) \nonumber\\
        &= \alpha_2C_p(H) + (\alpha_1-\alpha_2) R_p(H)+ (1-\alpha_2)Z_p(H).
    \end{align}
    In the next step we expand
    \begin{equation}
        1_{p|\omega(v,w)}=\frac{1}{p}\sum_{t\in \F_p} e^{i2\pi t \omega(v,w)/p}
    \end{equation}
    such that
    \begin{align}
        C_p(H) &= \frac{1}{p}\sum_{v\in\Lambda^{\times}}\sum_{t\in \F_p}  \lr{ \sum_{w\in \Lambda} H_p[v,w]e^{i2\pi t \omega(v,w)/p} - H_p[v,0]}\nonumber\\ 
        &= \frac{1}{p}A_p(H)+ \frac{1}{p}\sum_{t\in \F_p^{\times}}\sum_{v\in\Lambda^{\times}} \lr{ \sum_{w\in \Lambda} H_p[v,w]e^{i2\pi t \omega(v,w)/p}- H_p[v,0]}.\label{eq:40}
    \end{align}
    Define further \begin{equation}
        X_p(H)= \sum_{v\in \Lambda \setminus p\Lambda} H_p[v,0],\quad Y_p(H)=\sum_{\substack{v\in p\Lambda^{\times}\\ w\in \Lambda^{\times}}}H_p[v,w].
    \end{equation}
    The last sum in eq.~\eqref{eq:40} splits as
    \begin{align}
        &\hspace{1cm}\frac{1}{p}\sum_{t\in \F_p^{\times}}\sum_{v\in\Lambda^{\times}} \lr{ \sum_{w\in \Lambda} H_p[v,w]e^{i2\pi t \omega(v,w)/p}- H_p[v,0]}\nonumber\\ &= \frac{1}{p}\sum_{t\in \F_p^{\times}}\lrc{\sum_{v\in \Lambda \setminus p\Lambda}+\sum_{v\in p\Lambda^{\times}}} \lr{ \sum_{w\in \Lambda} H_p[v,w]e^{i2\pi t \omega(v,w)/p}- H_p[v,0]}\nonumber\\
        &=\frac{1}{p}\sum_{t\in \F_p^{\times}}\sum_{v\in \Lambda \setminus p\Lambda}\lr{ \sum_{w\in \Lambda} H_p[v,w]e^{i2\pi t \omega(v,w)/p}- H_p[v,0]} + \frac{p-1}{p}Y_p(H)\nonumber\\
        &=\frac{1}{p}\sum_{t\in \F_p^{\times}}\sum_{v\in \Lambda \setminus p\Lambda}\lr{ \sum_{w\in \Lambda} H_p[v,w]e^{i2\pi t \omega(v,w)/p}} + \frac{p-1}{p}(Y_p(H)-X_p(H))\label{eq:42}
    \end{align}

    To simplify the remaining expression we use Poisson summation,
    \begin{equation}
        \sum_{w\in \Lambda} H(v/\sqrt{p},w/\sqrt{p})e^{i2\pi t \omega(v,w)/p} = p^n\sum_{u\in\Lambda}\CF_2H\lr{\frac{v}{\sqrt{p}}, \frac{pu+tv}{\sqrt{p}}}.\label{eq:Poisson}
    \end{equation}
    For $z=pu+tv$ we have $\overline{z}=t\overline{v}$. Hence each pair of nonzero collinear $\overline{z}, \overline{v}$ uniquely determines a $t\in \F_p^{\times}$. Once a lift of $t$ is fixed, the vector $u=(z-tv)/p$ is also uniquely specified.
    Therefore, inserting eq.~\eqref{eq:Poisson} into eq.~\eqref{eq:42}, we get
    \begin{align}
        \frac{1}{p}\sum_{\substack{v\in \Lambda\setminus p\Lambda, w\in \Lambda\\ t\in \F^{\times}_p}} H_p[v,w]e^{i2\pi t \omega(v,w)/p} &= p^{n-1}\sum_{\substack{v\in \Lambda\setminus p\Lambda, u\in \Lambda\\ t\in \F^{\times}_p}} \CF_2H\lr{\frac{v}{\sqrt{p}}, \frac{pu+tv}{\sqrt{p}}} \nonumber\\
        &=p^{n-1}\sum_{\substack{v, z \in \Lambda\setminus p\Lambda, \\ \dim_{\F_p}\langle\overline{v}, \overline{z} \rangle=1}} \CF_2H\lr{\frac{v}{\sqrt{p}}, \frac{z}{\sqrt{p}}}
    \end{align}
    Calling \begin{equation}
        R^{nn}_p(\CF_2H) = \sum_{\substack{v, z \in \Lambda\setminus p\Lambda, \\ \dim_{\F_p}\langle\overline{v}, \overline{z} \rangle=1}} \CF_2H\lr{\frac{v}{\sqrt{p}}, \frac{z}{\sqrt{p}}},
    \end{equation}
    we have now obtained
    \begin{equation}
        C_p(H)=\frac{1}{p}A_p(H)+p^{n-1}R^{nn}_p(\CF_2H) + \frac{p-1}{p}(Y_p(H)-X_p(H)),
    \end{equation}
    and, in total,\begin{align}
    T_p\Theta^{\times}_H &= \frac{\alpha_2}{p}A_p(H)+ p^{n-1}\alpha_2 \lrq{R_p(H) + R_p(\CF_2H)} +E_p(H), \nonumber\\
    E_p(H) &= \alpha_2\frac{p-1}{p}\lrq{Y_p(H)-X_p(H)} + (1-\alpha_2)Z_p(H) \nonumber \\
    &\hspace{2cm} -p^{n-1}\alpha_2\lrq{R_p(\CF_2H)-R_p^{nn}(\CF_2H)}.\label{eq:Theta_total}
    \end{align}
    It is time to consider the limit $p\to \infty$. We fix $s=4n+\epsilon$. A key observation we use is that for $\Lambda\subset \R^{2n}$, $A\geq 1$ and $s>2n$, there exist constants $C_{L, s}$, such that
    \begin{equation}
        \sum_{v\in \Lambda} \lr{A+\frac{\|v\|}{\sqrt{p}}}^{-s} \leq C_{L, s} p^{n}A^{2n-s}.\label{eq:bound_tool}
    \end{equation}
    To verify this, set $R=A\sqrt{p}$, such that
    \begin{equation}
    \sum_{v\in \Lambda} \lr{A+\frac{\|v\|}{\sqrt{p}}}^{-s}= A^{-s}\sum_{v\in \Lambda} \lr{1+\frac{\|v\|}{R}}^{-s}.\label{eq:sum_bound}
    \end{equation}
    Let $B_R\subset \R^{2n}$ be the ball of radius $R$ and define the lattice shells
    \begin{equation}
        S_j=\{v\in\Lambda: 2^jR < \|v\|\leq 2^{j+1}R\},\quad j\geq 0.
    \end{equation}
    Inside $B_R$, every summand of eq.~\eqref{eq:sum_bound} is at most one and the corresponding partial sum can be bounded by $C_{L}R^{2n}$ points within the $B_R$. On each shell $S_j$, the contribution to the sum is proportional to $2^{-js}$, while $|S_j|\leq C'_L 2^{2n(j+1)}R^{2n}$. Consequently,
    \begin{equation}
        \sum_{v\in \Lambda} \lr{1+\frac{\|v\|}{R}}^{-s} \leq C_L R^{2n}+C'_L(2R)^{2n}\sum_{j\geq 0} 2^{(2n-s)j}.
    \end{equation}
For $s>2n$ the geometric sum converges absolutely, and we arrive by resubstituting $R$ at 
\begin{equation}
    \sum_{v\in \Lambda} \lr{A+\frac{\|v\|}{\sqrt{p}}}^{-s}\leq C_{L,s} p^n A^{2n-s}\label{eq:bound_sumA}
\end{equation}
for some constant $C_{L,s}\geq 0$. 
Using eq.~\eqref{eq:bound_tool} and $A=1$, $s>2n$ we can thus bound
\begin{align}
    |X_p(H)|&\leq C \sum_{v\in\Lambda} (1+\|v\|/\sqrt{p})^{-s} = O(p^n),
\end{align}
Using $A=1+\sqrt{p}\|v\|$ we next have
\begin{align}
    |Y_p(H)|&\leq C \sum_{v, w\in\Lambda^{\times}} (1+\sqrt{p}\|v\|+\|w\|/\sqrt{p})^{-s}\nonumber\\
    &\leq C' p^n \sum_{v\in \Lambda^{\times}} (1+\sqrt{p}\|v\|)^{2n-s} \nonumber\\
    &\leq p^{2n-s/2}\sum_{v\in \Lambda^{\times}} \|v\|^{2n-s} \leq C'' p^{2n-s/2}=C'' p^{-\epsilon/2}, 
\end{align}
where the final series converges since $s-2n>2n$.

For $Z_p(H)$ we write $(v,w)=(pa, pb)\in p\Lambda^{\times}$. Then
\begin{equation}
    |Z_p(H)|\leq Cp^{-s/2}\sum_{a, b\in \Lambda^{\times}} (p^{-\frac{1}{2}} +\|a\|+\|b\|)^{-s},
\end{equation}
where the series converges for $\epsilon >0$.

Finally, the last term $B_p(\CF_2H)=R_p(\CF_2H)-R_p^{nn}(\CF_2H)$ consists of pairs of vectors $(v,w)$ with either $\overline{v}=0\wedge \overline{w}\neq 0 $ or $\overline{v}\neq 0$ and $\overline{w}=0$. Their absolute sums are bounded by the $Y_p$ estimate applied to $(\CF_2H)(x,y)$ and $(\CF_2H)(y,x)$, respectively. Hence there is a new constant $C'''$, such that
\begin{equation}
    |B_p(\CF_2H)| \leq C'''p^{-\epsilon/2}.
\end{equation} 

Using $\alpha_2 = O(p^{1-2n})$ and $p^{n-1}\alpha_2=O(p^{-n})$ the contributions to the terms in $E_p$ are hence as follows.
\begin{table}[ht]
    \centering
    \begin{tabular}{lll}
        \hline
        Term & Bound under eq.~\eqref{eq:bound_H} & Contribution to $E_p$ \\
        \hline
        $X_p(H)$
        & $O(p^n)$
        & $O(p^{1-n})$ \\

        $Y_p(H)$
        & $O(p^{-\varepsilon/2})$
        & $O(p^{1-2n-\varepsilon/2})$ \\

        $Z_p(H)$
        & $O(p^{-2n-\varepsilon/2})$
        & $O(p^{-2n-\varepsilon/2})$ \\

        $R_p(\CF_2H)-R_p^{\mathrm{nn}}(\CF_2H)$
        & $O(p^{-\varepsilon/2})$
        & $O(p^{-n-\varepsilon/2})$ \\
        \hline
    \end{tabular}
    \caption{Bounds for a fixed lattice, decay exponent $4n+\varepsilon$, where $\varepsilon>0$.}
    \label{tab:error-bounds}
    \end{table}

In particular, $E_p(H)=O(p^{1-n})$. Now that $E_p(H)$ is out of the way, we look again at eq.~\eqref{eq:Theta_total}. For large $p$, $\alpha_2/p\sim p^{-2n}$, such that
\begin{align}
    \lim_{p\to \infty} \frac{\alpha_2}{p}A_p(H)=\lim_{p\to \infty} \frac{\alpha_2}{p}\sum_{v,w \in \Lambda^{\times}} H(v/\sqrt{p}, w/\sqrt{p}) =\int_{\R^{2n}\times \R^{2n}} H(x,y) dxdy,
\end{align}
as before, realizes a Riemann integral. 

To take the limit for $p^{n-1}\alpha_2R_p(H)$, we split the sum into pairs collinear over $\R$ and pairs independent over $\R$.
Every pair of nonzero collinear lattice vectors can be represented by $(v,w)=(az, bz), z\in \Lambda^{\times}$ where $a>0, b\neq 0$ and ${\rm gcd}(a,|b|)=1$. Since $a,b$ cannot be simultaneously divisible by $p$, we have $\dim_{\F_p}\langle \overline{v},\overline{w}\rangle=1$ precisely when $z\not\in p\Lambda$. %
The splitting is
\begin{equation}
    R_p(H)=\frac{1}{2}\sum_{\substack{a, b\in \Z^{\times}\\ {\rm gcd}(a, |b|)=1}}\sum_{z\in \Lambda\setminus p\Lambda} H(az/\sqrt{p},bz/\sqrt{p})+I_p(H), \label{eq:R_psplit}
\end{equation}
where the factor $1/2$ accounts for the overcounting of pairs $(v,w)=(az, bz)$ from triples $(a, b, z)\sim (-a,-b,-z) $ that yield the same initial pair, and

\begin{equation}
    I_p(H)=\sum_{\substack{v,w\in \Lambda^{\times} \\ \dim_{\R}\langle v,w \rangle =2 \\ \dim_{\F_p}\langle\overline{v}, \overline{w}\rangle=1 }}H(v/\sqrt{p}, w/\sqrt{p}).
\end{equation}

We can  bound $I_p(H)$ by the same argument as in eq.~\eqref{eq:bound_tool}. To this end we first count the  number of pairs $N_p(R)$ of vectors contributing to the sum, i.e., $v,w\in \Lambda^{\times}$ with $\dim_{\F_p}\langle \overline{v}, \overline{w}\rangle=1,\dim_{\R}\langle v, w\rangle=2 $ and $\|v\|+\|w\|\leq R$. 

To make this argument more intuitive to follow, we will deviate from our previous convention and use column vectors here.
Let $B$ be a basis for $\Lambda=B\Z^{2n}$ where $v=Bm, w=Bk$. The assumptions imply that for each $i=1,\hdots,2n$
\begin{equation}
    \sqrt{m_i^2+k_i^2} \leq \|m\|+\|k\| \leq \|B^{-1}\|R.\label{eq:mk_bound}
\end{equation} 
Since 
$\dim_{\F_p}\langle \overline{v}, \overline{w}\rangle=1$ we also have $\dim_{\F_p}\langle \overline{k}, \overline{m} \rangle=1$, where here we understand $\overline{k}$ as the reduction $k$ (mod $p$). Consider now the lattice on the space of coefficients $B_{km}=(k,m)\in \Z^{2n\times 2}$, where $B_{km}$ has rows $r_i=(k_i, m_i)\in \Z^2$. 
The number of pairs of points $v, w$ contributing to the sum in $I_p(H)$, with $\|v\|+\|w\|\leq R$, is the same as the number of choices of rows $\{r_i\}$ for $B_{km}$ that satisfy the rank and size conditions resulting from the requirements on $v, w$.

The matrix $\overline{B_{km}}$ has rank 1 over $\F_p$ and since row-rank equals column rank, we have that its reduced rows span a single line
\begin{equation}
    \ell={\rm span}_{\F_p}\{\overline{r}_1,\hdots,\overline{r}_{2n}\}\subset \F_p^2
\end{equation}
and every $r_i$ belongs to
the planar lattice
\begin{equation}
    M_{\ell} = \lrc{r\in \Z^2: \overline{r}\in \ell},
\end{equation}
which spans $[\Z^2 : M_{\ell}]=p$ integer cosets, as verified via
\begin{equation}
    p^2=[\Z^2:p\Z^2] = [\Z^2:M_{\ell}][M_{\ell}:p\Z^2]=[\Z^2:M_{\ell}]p.
\end{equation}
By eq.~\eqref{eq:mk_bound}, every row $r_i$ of $B_{km}$ lies in $M_\ell\cap B_{L}$, where $L=C_\Lambda R$ and $C_\Lambda=\|B^{-1}\|$. 
Since $B_{km}$ has real rank two, only lines for which $M_\ell\cap B_L$ contains two independent vectors $u,v$ contribute. The centered fundamental parallelogram of $\Z u+\Z v\subseteq M_\ell$ lies in $B_L$, so the covering radius of $M_\ell$ is at most $L$. Its Voronoi cells centered in $B_L$ thus lie in $B_{2L}$ and each has area $p$. Each row has at most $|M_\ell\cap B_L|$ choices, and there are at most $p+1$ eligible lines. Therefore
\begin{equation}
    |M_{\ell}\cap B_L|\leq \frac{4\pi L^2}{p},\quad N_p(R)\leq C'_{\Lambda}(p+1)\lr{\frac{R^2}{p}}^{2n}
\end{equation}
for a new constant $C'_{\Lambda}$. We depict an example construction on $M_{\ell}$ and the Voronoi-shell structure in fig.~\ref{fig:ip-bound}
\begin{figure}[htbp]
    \centering
    \setlength{\panelheight}{0.40\textwidth}%
    \begin{subfigure}[t]{0.48\textwidth}
        \centering
        \includegraphics[height=\panelheight]{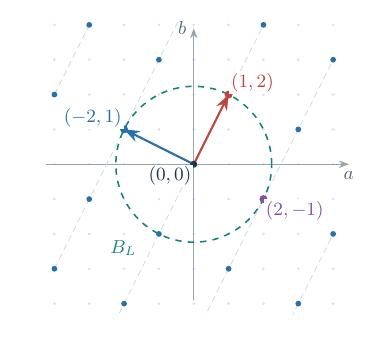}
    \end{subfigure}
    \hfill
    \begin{subfigure}[t]{0.48\textwidth}
        \centering
        \includegraphics[height=\panelheight]{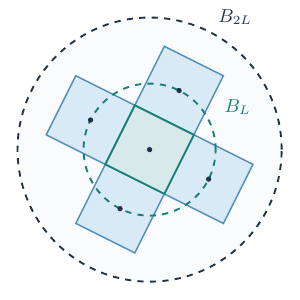}
    \end{subfigure}
    \caption{Geometry behind the bound on $I_p(H)$. The left illustration depicts the lattice $M_{\ell}$ at $\F_5$ for $v=(1, -2, 2, 0 )$ and $w=(2, 1, -1, 0)$; the right illustration depicts Voronoi cells for $M_{\ell}$ in the first shells of $B_L$.}
    \label{fig:ip-bound}
\end{figure}

Finally, let $s=4n+\epsilon$ denote the decay exponent in eq.~\eqref{eq:bound_H}, and let $S_I\subset \Lambda^{\times}\times \Lambda^{\times}$ denote the set of pairs of vectors $(v,w)$ occurring in the sum defining $I_p(H)$.
We split this sum into the inner region $\|v\|+\|w\|\leq\sqrt p$ and, for $j\geq0$, into the shells
\begin{equation}
    S_j=\lrc{(v,w)\in S_I: 2^j\sqrt p<\|v\|+\|w\|\leq2^{j+1}\sqrt p}.
\end{equation}
This now allows us to bound $I_p(H)$ using the same argument as we used to derive eq.~\eqref{eq:bound_sumA}. In the inner region the summand is at most $1$ and there are $N_p(\sqrt{p})$ pairs, while on $S_j$ the summand is bounded by $2^{-js}$ while there are $|S_j|\leq N_p(2^{j+1}\sqrt{p})$ pairs that contribute. Hence, with $D_{\Lambda}, D_{\Lambda}'$ constants independent of $p$,
\begin{align}
    |I_p(H)|&\leq C\sum_{(v,w)\in S_I} \lr{1+\frac{\|v\|+\|w\|}{\sqrt{p}}}^{-s} \nonumber\\ 
    &\leq CN_p(\sqrt{p})+C\sum_{j\geq 0} 2^{-js}N_p(2^{j+1}\sqrt{p})\nonumber\\
    &\leq D_{\Lambda}(p+1) + D_{\Lambda}'\sum_{j\geq 0}2^{-js}(p+1)2^{4n(j+1)}\nonumber\\
    &=   D_{\Lambda}(p+1) + 2^{4n}(p+1)D_{\Lambda}'\sum_{j\geq 0}2^{-j\epsilon}=O(p).
\end{align}
Finally, as $p^{n-1}\alpha_2\leq p^{-n}$, we obtain
\begin{equation}
    p^{n-1}\alpha_2|I_p(H)| = O(p^{1-n}),
\end{equation}
which vanishes in the limit for $n\geq 2$.

Now that also $I_p(H)$ is out of the way, we consider the remaining term in $R_p(H)$ in eq.~\eqref{eq:R_psplit}. Again we use that $p^{n-1}\alpha_2\sim p^{-n}$, and
for each fixed $(a,b)$ Riemann integration gives 

\begin{equation}
    \lim_{p\to\infty} p^{-n} \sum_{z\in \Lambda \setminus p\Lambda} H(az/\sqrt{p},\,bz/\sqrt{p}) = \int_{\R^{2n}} H(az, bz)dz,\label{eq:collinear}
\end{equation}
once we establish that the limit $p\to \infty$ can be exchanged with the sum over $(a,b)$ and that the same sum over $p\Lambda^\times$ vanishes in the limit. We use the decay condition to bound,
\begin{align}
    &p^{-n}\sum_{z\in \Lambda\setminus p\Lambda}|H(az/\sqrt p,bz/\sqrt p)| \nonumber \\
    &\quad\leq C p^{-n} (|a|+|b|)^{-4n-\epsilon}\sum_{z\in \Lambda^\times}\lr{\frac{1}{|a|+|b|}+\frac{\|z\|}{\sqrt{p}} }^{-4n-\epsilon}\nonumber \\
    &\quad\leq C_{\Lambda, s} (|a|+|b|)^{-2n}
\end{align}
where we use eq.~\eqref{eq:bound_tool} with the origin omitted. In this form, the same estimate holds for every $A>0$.
For $n>1$ the resulting sum over $a,b\in \Z^{\times}$ is finite and we are allowed to change the sum and limit by dominated convergence. For the second point, we have, in the limit $p\to\infty$,
\begin{align}
    p^{-n}\sum_{z\in p\Lambda^\times}|H(az/\sqrt p,bz/\sqrt p)|
    &\leq C p^{-n}\sum_{z\in \Lambda^\times} (1+\sqrt{p}(|a|+|b|)\|z\|)^{-4n-\epsilon} \nonumber \\
    &\leq C p^{-3n-\epsilon/2}\sum_{z\in \Lambda^\times}((|a|+|b|)\|z\|)^{-4n-\epsilon}.
\end{align}
The series in the last expression converges for $n>0$ under both the $z\in \Lambda^\times$ and $a,b \in \Z^{\times}$ sums and vanishes as $p\to\infty$. The omitted origin contributes $p^{-n}H(0,0)\to0$ for each fixed $(a,b)$.

The last ingredient we need is the Möbius function $\mu: \Z_{\geq 1} \to \{-1,0,1\}$, which has the properties
\begin{equation}
    \sum_{d|n}\mu(d) = \delta_{n,1},\quad \sum_{d\geq 1} \frac{\mu(d)}{d^s} = \frac{1}{\zeta(s)}
\end{equation} 
for $\Re s >1$. For a sufficiently well decaying sequence $T_{a,b}$ we can then write
\begin{align}
    \sum_{a,b\in \Z^{\times}}1_{{\rm gcd}(a,b)=1}T_{a,b}=\sum_{a,b\in \Z^{\times}} \sum_{d|{\rm gcd}(a,b)}\mu(d)T_{a,b}=\sum_{a,b\in \Z^{\times}} \sum_{\substack{d|a\\ d|b}}\mu(d)T_{a,b},
\end{align}
since the common divisors of $a, b$ are exactly the divisors of ${\rm gcd }(a,b)$.

So we can assemble
\begin{align}
    \lim_{p\to\infty} p^{-n}R_p(H) &= \frac{1}{2}\sum_{\substack{a,b\in \Z^{\times} \\ {\rm gcd(a, b)=1}}} \int_{\R^{2n}} H(az, bz)dz \nonumber\\
    &=\frac{1}{2}\sum_{a, b\in \Z^{\times}} \sum_{d\geq 1}\mu(d) 1_{(d|a)\wedge (d|b)} \int_{\R^{2n}} H(az, bz)dz \nonumber\\
    &=\frac{1}{2}\sum_{d\geq 1}\mu(d)\sum_{a, b\in \Z^{\times}}  \int_{\R^{2n}} H(daz, dbz)dz \nonumber\\
    &=\frac{1}{2}\sum_{d\geq 1}\frac{\mu(d)}{d^{2n}}\sum_{a, b\in \Z^{\times}}  \int_{\R^{2n}} H(az, bz)dz \nonumber\\ 
    &=\frac{1}{2\zeta(2n)}\sum_{a, b\in \Z^{\times}}  \int_{\R^{2n}} H(az, bz)dz.
\end{align}
With this expression, the definition for $\CL(\cdot)$ in eq.~\eqref{eq:def_CL} and repeating the same line of argumentation using $\CF_2H$ as argument of $R_p(\cdot)$, we can now assemble the $p\to \infty$ limit of eq.~\eqref{eq:Theta_total} to 
\begin{equation}
    \lim_{p\to\infty} T_p\Theta_H^{\times} = \int_{\R^{2n}\times \R^{2n}} H(x, y) dx dy + \CL(H)+\CL(\CF_2 H).\label{eq:74}
\end{equation}

For nonnegative $H$ satisfying the hypotheses, the same truncation and Fatou's lemma steps from the first-moment proof identify its integral with the finite limit in eq.~\eqref{eq:74}. For general $H$, we first prove integrability using a Gaussian envelope.

Let $G_R(x,y)=e^{-\pi(\|x\|^2+\|y\|^2)/R^2}$.
For $2^{j-1}< r=\|x\|+\|y\| \leq 2^j$ the pointwise bound is
\begin{equation}
    |H(x,y)|\leq C(1+r)^{-s}\leq C(1+2^{j-1})^{-s}< C2^{-s(j-1)}
\end{equation}
while $G_{2^j}(x,y)\geq e^{-\pi } $. So there is a new constant $C$, such that
\begin{equation}
    |H(x,y)| \leq C\sum_{j\geq 0} 2^{-sj}G_{2^j}(x,y).
\end{equation}
The partial Fourier transform is
\begin{equation}
 (\CF_2G_R)(x,\eta)=R^{2n}e^{-\pi\|x\|^2/R^2-\pi R^2\|\eta\|^2},
\end{equation}
so the nonnegative case and eq.~\eqref{eq:74} give, for $R\geq1$,
\begin{align}
 \int_{Y_n}\Theta_{G_R}^{\times}\,d\mu_n
 &=R^{4n}+\frac{R^{2n}}{2\zeta(2n)}
 \sum_{a,b\in\Z^\times}\left[(a^2+b^2)^{-n}+(R^{-2}a^2+R^2b^2)^{-n}\right]\nonumber\\
 &\leq(1+c_n)R^{4n},\qquad
 c_n=\frac{1}{\zeta(2n)}\sum_{a,b\in\Z^\times}(a^2+b^2)^{-n}<\infty.
\end{align}
Here we used $R^{-2}a^2+R^2b^2\geq R^{-2}(a^2+b^2)$, and finally obtain
\begin{equation}
 \int_{Y_n}\Theta_{|H|}^{\times}\,d\mu_n
 \leq C'(1+c_n)\sum_{j\geq0}2^{-(s-4n)j}<\infty,
 \qquad s=4n+\epsilon.
\end{equation}

Write $F=\Theta_H^\times\in L^1(\mu_n)$. Positivity, $T_p1=1$, and stationarity give
\begin{equation}
 \sup_p\int_{\{|T_pF|>K\}}|T_pF|\,d\mu_n
 \leq 2\int_{\{|F|>K/2\}}|F|\,d\mu_n
 \longrightarrow 0
 \qquad(K\to\infty).
\end{equation}
Thus the averages are uniformly integrable, and the pointwise convergence in eq.~\eqref{eq:74} holds also in $L^1$ by Vitali's theorem. Integrating the limit and using stationarity proves the claim.
\end{proof}

\section{Infinite quantum $t$-designs and their defects}\label{sec:designs}
\subsection{Background}
GKP codes \cite{Gottesman_2001} are quantum stabilizer codes for an infinite-dimensional separable Hilbert space $\CH=L^2(\R^{n})$, where the stabilizer group is chosen as a subgroup of the unitary representation of the Heisenberg group $H(\R^n)$ via displacement operators. For a wavefunction $\psi\in L^{2}(\R^n)$, displacement operators act as
\begin{align}
    D:\R^{2n}&\to U(\CH),\quad  (z_1, z_2) \mapsto D(z), \nonumber\\
    D((z_1, z_2))\psi(q) &=e^{-i\pi z_1^\intercal z_2}e^{i\sqrt{2\pi}z_2^\intercal q}\psi(q-\sqrt{2\pi}z_1).
     \label{eq:disp_def}
\end{align}
Crucially, they commute up to a phase determined by the symplectic inner product $\omega(x,y)$,
\begin{equation}
 D(x)D(y)=e^{-i\pi \omega(x, y)}D(x+y) = e^{-i2\pi \omega(x, y)}D(y)D(x)
\end{equation}
and are distributionally orthonormal
\begin{equation}
    \Tr[D(x)]=\delta(x).\label{eq:Disp_ortho}
\end{equation}
They hence define a basis for a subset of operators in $B(\CH)$ and we call self-adjoint operators that can be written as the weak integral~\cite{Hall2013,Keyl_2016}
\begin{equation}
    O_f=\int_{\R^{2n}} f(x) D(x) dx,\label{eq:Bochner}
\end{equation}
with $f\in \CS(\R^{2n})$ a Schwartz function, and $O_f\geq0$ and $\Tr O_f=1$, a physical operator. The subset of operators in $B(\CH)$ with finite trace norm $\|O\|_1=\Tr[|O|]$ is called trace-class $\CS_1(\CH)$ and the operators finite in the norm induced by the Hilbert-Schmidt inner product 
\begin{equation}
    \langle A, B \rangle_{\rm HS} = \Tr[A^{\dagger}B],\quad \|O\|_2^2=\langle O, O \rangle_{\rm HS}
\end{equation}
are Hilbert-Schmidt (HS) operators $\CS_2(\CH)$. Generalized observables are self-adjoint elements of the continuous dual of the Schwartz operators and admit the distributional representation~\eqref{eq:Bochner} with $f\in\CS'(\R^{2n})$~\cite{Keyl_2016}. All operator-valued sums and integrals below are understood by pairing with Schwartz operators.

Following~\cite{Iosue_2024} we call a measurable subset of (non-physical) states $\chi\subset \CS'(\R^n)$ with measure $\mu_{\chi}$ a quantum $t$-design if there are constants $a_{k}\in \R_{>0}$, such that
\begin{equation}
    \int_{\chi} \ketbra{\psi}^{\otimes k} d\mu_{\chi}(\psi) = a_{k}\Pi_{k} = \frac{a_{k}}{|S_{k}|}\sum_{\sigma\in S_{k}} P_{\sigma}\label{eq:t_design}
\end{equation}
is proportional to the projector onto the symmetric subspace of $\CH^{\otimes k}$ for all $k=1,...,t$. Here, $P_{\sigma}$ are unitary representations of the permutations across tensor (replica) factors,
\begin{align}
    &P_{(0)}=I,\quad &&P_{(12)}\ket{\psi}\ket{\phi}=\ket{\phi}\ket{\psi},\\
    &P_{(123)}=P_{(12)}P_{(23)}, \quad  &&P_{(132)}=P_{(23)}P_{(12)}.
\end{align}

Written in the basis eq.~\eqref{eq:Bochner}, using orthonormality of eq.~\eqref{eq:Disp_ortho}, we find

\begin{align}
    P_{(0)}&=I=D(0), \nonumber\\
    P_{(12)}&= \int_{\R^{2n}} D(x)\otimes D(-x) dx,  \nonumber\\
    P_{(123)}&= \int_{\R^{2n}\times \R^{2n}} e^{i\pi\omega(x,y)} D(x)\otimes D(y)\otimes D(-x-y) dxdy, \nonumber\\ 
    P_{(132)}&= \int_{\R^{2n}\times \R^{2n}} e^{-i\pi\omega(x,y)} D(x)\otimes D(y)\otimes D(-x-y) dxdy.\label{eq:permute_disp}
\end{align}

\subsection{GKP $t$-designs}
In \cite{Conrad_2025} we considered a set of unimodular GKP states for $\chi$. Unimodular GKP states are (non-physical quantum states) described by operators 
\begin{equation}
    G_{\Lambda, s; M}=\sum_{a\in \Z^{2n}} e^{-i\pi \phi_M(a)} e^{-i2\pi\omega(s, aM)}D(aM),
\end{equation}
for $s\in P(\Lambda)$, a fundamental domain for $\Lambda$, basis $M\in\Sp_{2n}(\R)$ for $\Lambda$, with $MJM^\intercal =J$, and $a=(a_1, a_2)$, $\phi_M(a)=a_1^{T}a_2$.
Note that the only implicit dependency on the basis $M$ for $\Lambda$ is in $\phi_M$.
Inserting into eq.~\eqref{eq:t_design}, we find
\begin{align}
    \int_{P(\Lambda)} G_{\Lambda, s; M}^{\otimes t} ds = \sum_{\{a_i\in \Z^{2n}\}_{i=1}^t}  &\lrc{\prod_{i=1}^t e^{-i\pi \phi_M(a_i)}} \lrq{\int_{P(\Lambda)} e^{-i2\pi\omega(s,\sum_{{i}} a_iM)} ds} \nonumber\\
    &\bigotimes_{i=1}^tD(a_iM).\label{eq:fundamental_int}
\end{align}
The expression in the square bracket yields generally 
\begin{equation}
    \int_{P(\Lambda)} e^{-i2\pi\omega(s,\sum_{{i}} a_iM)} ds = \delta_{0,\sum_{{i}} a_iM},
\end{equation}
such that for $t=1$, we just have
\begin{equation}
    \int_{P(\Lambda)} G_{\Lambda, s; M}^{\otimes t} ds = D(0)=I.
\end{equation}
For $t=2$, $a_1+a_2=0$ forces the expression in the curly brackets to $1$, such that
\begin{equation}
    \int_{P(\Lambda)} G_{\Lambda, s; M}^{\otimes t} ds = \sum_{\lambda \in \Lambda} D(\lambda)\otimes D(-\lambda)
\end{equation}
becomes a formal Siegel transform and only depends on $\Lambda$. Understood relative to pairing with a Schwartz operator we hence can apply the first moment formula. This yields
\begin{equation}
    \int_{Y_n}\int_{P(\Lambda)} G_{\Lambda, s; M}^{\otimes 2} ds\,d\mu_n(\Lambda) = I+\int_{\R^{2n}} D(x)\otimes D(-x)\,dx = 2\Pi_2,
\end{equation}
We have successfully reproduced the result of ref.~\cite{Conrad_2025}: the ensemble of lattices and fundamental translations forms a rigged quantum $2$-design.

Now that we are also equipped with a symplectic second moment formula, we can investigate whether the same ensemble also yields a quantum $3$-design. With eq.~\eqref{eq:fundamental_int}, we again have
\begin{equation}
    \int_{P(\Lambda)} G_{\Lambda, s; M}^{\otimes 3} ds =\sum_{\{a_i\in \Z^{2n}\}_{i=1}^{3}}  \lrc{\prod_{i=1}^{3} e^{-i\pi \phi_M(a_i)}} \delta_{a_3, -a_1-a_2}\bigotimes_{i=1}^3D(a_iM)
\end{equation}
Now, using $\phi_M(a)=aX_M a^\intercal $ for a matrix $X_M$ such that $X_M+X_M^\intercal  \equiv J \text{ (mod $2$)}$, we can evaluate the curly expression under $a_3=-a_1-a_2$ to
\begin{equation}
    \prod_{i=1}^3 e^{-i\pi \phi_M(a_i)} =e^{-i\pi a_1(X_M+X_M^\intercal )a_2^\intercal } = e^{-i\pi\omega (a_1M, a_2M)} = e^{-i\pi\omega(\lambda_1, \lambda_2)},
\end{equation}
where have set $\lambda_i=a_iM$. The right-hand side now is basis independent.

We proceed to integrate 
\begin{align}
    G_{\Lambda}^{(3)}&=\int_{P(\Lambda)} G_{\Lambda, s; M}^{\otimes 3}  ds \nonumber \\
    &= \sum_{\lambda_1, \lambda_2 \in \Lambda} e^{-i\pi\omega(\lambda_1, \lambda_2)} D(\lambda_1)\otimes D(\lambda_2)\otimes D(-\lambda_1-\lambda_2)\nonumber \\
    &= I + \sum_{\lambda_1 \in \Lambda^{\times}}  D(\lambda_1)\otimes I\otimes D(-\lambda_1) + \sum_{ \lambda_2 \in \Lambda^{\times}}  I\otimes D(\lambda_2)\otimes D(-\lambda_2) \nonumber\\
    &\hspace{1cm}+ \Theta_H^{\times}(\Lambda)
\end{align}
over $Y_n$, where we separated the zero contributions and defined the formal Siegel transform of
\begin{equation}
    H(x,y) = e^{-i\pi\omega(x, y)} D(x)\otimes D(y)\otimes D(-x-y).
\end{equation}
Applying the second moment formula now yields -- again relative to pairing with a Schwartz operator -- 
\begin{align}
    \int_{Y_n}  G_{\Lambda}^{(3)} d\mu(\Lambda) = I + P_{(13)}+ P_{(23)}+\int_{Y_n} \Theta_H^{\times}(\Lambda) d\mu(\Lambda),
\end{align}
with 
\begin{align}
    \int_{Y_n} \Theta_H^{\times}(\Lambda) d\mu(\Lambda)=P_{(132)}+ \CL(H)+\CL(\CF_2 H).
\end{align}
We need two more terms, $P_{(123)}$ and $P_{(12)}$ to complete all the ingredients for the projector $\Pi_3$. To this end consider the anti-diagonal terms in $\CL(H)$ and the diagonal terms in $\CL(\CF_2H)$. By isolating the terms, substituting, and using the definitions for $\zeta(s)$ and $\CF_2H$, one can compute
\begin{align}
    \CL(H) &= P_{(12)}+\frac{1}{2\zeta(2n)}\sum_{\substack{a,b\in \Z^{\times}\\ a+b\neq 0}} \int_{\R^{2n}} H(az,bz) dz,
\end{align}
and 
\begin{align}
    \CL(\CF_2H) &= P_{(123)}+\frac{1}{2\zeta(2n)}\sum_{\substack{a, b\in \Z^{\times}\\ a\neq b}} \int_{\R^{2n}} (\CF_2H)(az,bz) dz.
\end{align}
So in total, we obtain
\begin{equation}
    \int_{Y_n}  G_{\Lambda}^{(3)} d\mu(\Lambda) = 6\Pi_3 + \CD_3,
\end{equation}
where we have the total defect,
\begin{equation}
    \CD_3=\frac{1}{2\zeta(2n)}\sum_{\substack{a,b\in \Z^{\times}\\ a\neq -b}} \int_{\R^{2n}} H(az,bz) dz + \frac{1}{2\zeta(2n)}\sum_{\substack{a, b\in \Z^{\times}\\ a\neq b}} \int_{\R^{2n}} (\CF_2H)(az,bz) dz.
\end{equation}

\subsection{Defect analysis}
A partial sum of $\CD_3$ is the $a=b$ contribution to $\CL(H)$, which is
\begin{align}
    d_{aa}(H)&=\frac{1}{2\zeta(2n)}\sum_{a\in \Z^{\times}} \int_{\R^{2n}} D(az)\otimes D(az)\otimes D(-2az)\,dz \\ 
    &= \int_{\R^{2n}} D(z)\otimes D(z)\otimes D(-2z)\,dz. 
\end{align}
We recognize that the vector $(1,1,-2)$ classifying this defect contribution matches the defect generating line $(1,1,1)\in \F_3^3$ that prevents $t=3$ quantum designs for qutrit stabilizer ensembles in~\cite{Gross_2021}.

We can in fact exactly evaluate $\CD_3$. To this end let $\psi\in L^2(\R^{3n})$ be a wave function for the $3$-replica spaces, for which we simply write
\begin{equation}
    \psi(q)=\psi(q_1, q_2, q_3).
\end{equation}
Further we define
\begin{equation}
    r_{ab}=(a,\,b,\,-a-b)^\intercal \in {\sf A}_2,\quad Q_{ab}=r_{ab}^\intercal r_{ab}/2.
\end{equation}
It will be convenient to use compressed notation $x=r\otimes z\in \R^{6n}$, where the first tensor factor labels replica spaces with $r\in \R^3$ and the second one physical phase space $z=(z_1,\,z_2)\in \R^{2n}$.
Using the defining equation eq.~\eqref{eq:disp_def}, a short calculation then yields
\begin{equation}
    H(az, bz)\psi(q) = e^{-i2\pi Q_{ab}z_1^\intercal z_2} e^{i\sqrt{2\pi}z_2^\intercal (r_{ab}^\intercal \otimes I_n)q} \psi(q-\sqrt{2\pi}(r_{ab}\otimes z_1)).
\end{equation}
Integrating first over $z_2$, using the integral form of $\delta(Q_{ab}x)=Q_{ab}^{-n}\delta(x)$, and then evaluating the $z_1$ integral lets us write
\begin{equation}
    \int_{\R^{2n}} H(az, bz)\psi(q) dz = Q_{ab}^{-n} \psi\lr{\lrq{\lr{I_3-\frac{r_{ab}r_{ab}^\intercal }{Q_{ab}}}\otimes I_n} q}.
\end{equation}
The replica reflection and its phase-space action are
\begin{equation}
    R_{ab}=I_3-\frac{r_{ab}r_{ab}^\intercal }{Q_{ab}},\qquad
    S_{ab}=R_{ab}\oplus R_{ab}\in\Sp_6(\R)\cap\SO_6(\R).
\end{equation}
Here phase coordinates list positions before momenta, and $S_{ab}$ acts identically on each physical mode. Its Gaussian unitary has the vacuum-preserving phase convention
\begin{equation}
 U_{S_{ab}}\psi(q)=\psi((R_{ab}\otimes I_n)q)
\end{equation}
(see e.g. the survey by~\cite{Weedbrook_2012} for more on this topic).

For the second term, we have relative to physical states
\begin{equation}
    \CF_2H(az, bz)= \int_{\R^{2n}}  e^{i\pi(2b-a)\omega(z, y)} D(az)\otimes D(y)\otimes D(-az-y)\,dy,
\end{equation}
and, for $a\neq0$, expanding $P_{(12)}$ in the variable $u$ and setting $y=(a-b)z-u$, $z'=z-y/a$ gives the integrated identity
\begin{equation}
    \left[\int_{\R^{2n}} H(bz,(a-b)z)\,dz\right]P_{(12)}
    =\int_{\R^{2n}}(\CF_2H)(az,bz)\,dz.
\end{equation}
In total, reindexing lets us summarize
\begin{equation}
    \CD_3=\frac{1}{2\zeta(2n)}
    \sum_{\substack{a, b\in \Z^{\times}\\ a+b\neq0}}Q_{ab}^{-n}U_{S_{ab}}(I+P_{(12)}).
\end{equation}
The sum converges in operator norm for $n\geq 2$. Let $\rho_0$ be the vacuum state with displacement amplitudes $\Tr[\rho_0 D(x)]=e^{-\pi \|x\|^2/2}$. The vacuum is rotation invariant, so \begin{equation}
    \Tr[\rho_0^{\otimes 3} U_{S_{ab}}]=\Tr[\rho_0^{\otimes 3} P_{(12)}]=1, 
\end{equation}
and thus
\begin{equation}
    \Tr[\rho_0^{\otimes 3} \CD_3]=\|\CD_3\|=\frac{1}{\zeta(2n)}\sum_{\substack{a, b\in \Z^{\times}\\ a+b\neq0}}Q_{ab}^{-n}\eqqcolon \delta_n.
\end{equation}

To exclude the possibility of $\CD_3\propto \Pi_3$, consider the Fock state $\ket{\phi}=\ket{1,0,\hdots,0}^{\otimes 3}$. One can compute that for all $a, b \in \Z, a^2b^2(a+b)^2>0,$
\begin{equation}
    \braket{\phi | U_{S_{ab}}|\phi}={\rm Perm}(R_{ab})\in (0,1),
\end{equation}
and hence $\braket{\phi | \CD_3|\phi}<\delta_n$. 

The partial series for $(a,b) \in \lrc{(c,0), (0,c), (c,-c),\; c\in \Z^{\times}}$ each evaluate to $2\zeta(2n)$, so using the Epstein $\zeta$-function for the normalized lattice ${\sf A}_2$
we have 
\begin{equation}
    \delta_n = \frac{1}{\zeta(2n)}\sum_{(0,0)\neq (a,b)\in\Z^{2}}Q_{ab}^{-n} -6=\frac{\zeta_{{\sf A}_2}(n)}{\zeta(2n)}-6.
\end{equation}
To obtain the asymptotic estimate we count that there are exactly $6$ ways to attain the value $Q_{11}=Q_{(-1)(-1)}=Q_{1,(-2)}=\hdots =3$ and all other values for $Q_{ab}\geq 7$ for all other summands. Hence we also find $\delta_n=6\cdot 3^{-n}+O(7^{-n})$.
This concludes the derivation of Theorem~\ref{thm:main}.\qed

\section*{Acknowledgements}
We thank the organizers of the Random Geometry and Number Theory workshop at the Bernoulli center, EPFL for hosting the meeting that facilitated this collaboration.
JC is grateful to the organizers of the workshop on random quantum circuits (Cologne, 2026) for the invitation to speak at the workshop, which motivated him to produce these notes. 
We thank Nihar Gargava and Seungki Kim for valuable discussions and feedback.
JC used LLMs (ChatGPT) for studying the literature and suggesting proof strategies, proofreading and recommending improvements to the presentation in preparation of this work. Also figure~\ref{fig:ip-bound} was produced using ChatGPT.

\clearpage
\appendix
\section{Physics--math dictionary}\label{app:dictionary}
Here $\CH=L^2(\R^n)$. Below we outline a dictionary between the physics language used in this work and the corresponding mathematical object. For further reading we recommend the survey~\cite{Weedbrook_2012}, as well as the book by Hall~\cite{Hall2013}.
\begin{table}[htbp]
\centering
\small
\renewcommand{\arraystretch}{1.2}
\begin{tabular}{@{}>{\raggedright\arraybackslash}p{0.30\textwidth}>{\raggedright\arraybackslash}p{\dimexpr0.70\textwidth-2\tabcolsep\relax}@{}}
\toprule
\textbf{Physics term} & \textbf{Mathematical object} \\
\midrule
$n$ modes & The $n$ factors in $\CH\cong L^2(\R)^{\otimes n}$. \\
\addlinespace
Replica & One copy of the entire space $\CH$. $t$ replicas give $\CH^{\otimes t}$. \\
\addlinespace
Physical state & A positive trace-class operator $\rho$ with $\Tr\rho=1$. A state is pure precisely when $\operatorname{rank}\rho=1$. \\
\addlinespace
Vacuum & The normalized Gaussian $\psi_0(q)=\pi^{-n/4}e^{-\|q\|^2/2}$. \\
\addlinespace
Displacement operator & A Weyl operator $D(x)$, representing a phase-space translation $x\in\R^{2n}$. \\
\addlinespace
GKP state & A generalized common eigenvector of a commuting family of lattice-indexed Weyl operators. \\
\addlinespace
Qudit & A quantum system with Hilbert space $\C^p$. \\
\addlinespace
Gaussian unitary & A unitary implementing a linear symplectic transformation of phase space. \\
\addlinespace
Replica permutation $P_\sigma$ & The operator permuting the $t$ factors of $\CH^{\otimes t}$. \\
\addlinespace
Schwartz operator & An operator with a Schwartz integral kernel, used to test the weak moment identities (see ref.~\cite{Keyl_2016}). \\
\bottomrule
\end{tabular}
\caption{Overview of physics-math dictionary relevant to this work.}
\label{tab:physics-math-dictionary}
\end{table}

Generalized state vectors are members of $\CS'(\R^n)$ in the rigged Hilbert space,
\begin{equation}
   \CS(\R^n)\subset L^2(\R^n)\subset\CS'(\R^n),
\end{equation}
where the Schwartz operators form the separate test space for operator-valued identities.

\bibliographystyle{apalike}
\bibliography{references}

\end{document}